\newcommand{\fwidth}[2]{\textsf{fw}(#1, #2)}
\newcommand{\width}[1]{\textsf{width}(#1)}
\newcommand{\pw}[1]{\textsf{par-width}(#1)}
\newcommand{\val}[1]{\textsf{Val}(#1)}
\newcommand{\ch}[1]{\textsf{Ch}_{#1}}
\newcommand{\NP}{$\mathsf{NP}$}
\newcommand{\mfdsizen}[2]{\mathsf{mfd}_\mathbb{N}(#1,#2)}
\newcommand{\mfdsizez}[2]{\mathsf{mfd}_\mathbb{Z}(#1,#2)}
\newcommand{\mfdsizeu}[2]{\mathsf{mfd}_{\{-1, 1\}}(#1,#2)}
\newcommand{\mfdsizeo}[2]{\mathsf{mfd}_{\{1\}}(#1,#2)}
\newcommand{\mfdsize}[2]{\mathsf{mfd}_\mathbb{Y}(#1,#2)}
\newcommand{\mfd}{\textsc{MFD}\xspace}
\newcommand{\mfdn}{\textsc{MFD}$_\mathbb{N}$\xspace}
\newcommand{\mfdz}{\textsc{MFD}$_\mathbb{Z}$\xspace}
\newcommand{\gpc}[1]{G_{P({#1})}}
\newcommand{\N}[0]{\mathbb{N}}
\newcommand{\Z}[0]{\mathbb{Z}}
\newcommand{\Y}[0]{\mathbb{Y}}
\title{Width Parameters for Minimum Flow Decomposition}
\author{Andreas Grigorjew}{Universit\'e Paris-Dauphine, PSL University, CNRS UMR7243, LAMSADE, Paris, France}{}{0000-0003-0989-2415}{This work was supported by ANR project ANR-21-CE48-0022 (S-EX-AP-PE-AL). During the research stage of this work, A.G. was affiliated with the University of Helsinki.}
\author{Wanchote Po Jiamjitrak}{University of Helsinki, Finland}{}{0009-0005-9912-7558}{}
\author{Brendan Mumey}{Montana State University, United States}{}{0000-0001-7151-2124}{}
\author{Alexandru I. Tomescu}{University of Helsinki, Finland}{}{0000-0002-5747-8350}{}
\authorrunning{Grigorjew et al.} 
\keywords{Flow decomposition, Parameterised complexity, Directed acyclic graph, Width, Parallel-width, Flow-width} 
\begin{document}

\maketitle

\begin{abstract}
    Minimum flow decomposition (\mfd) is the strongly \NP-hard problem of finding a smallest set of integer weighted $s$-$t$ paths in an $s$-$t$ DAG $G$ whose weighted sum is equal to a given flow $f$ on $G$. Despite its many practical applications, we lack an understanding of graph structures that make \mfd easy or hard. 
	Recent progress is due to C{\'a}ceres et al.~[ACM TALG 2024], who showed that the DAG \emph{width}, the minimum number of paths to cover all edges, plays an essential role in the approximation of the problem.

    
    Our first set of results regard the computational complexity of \mfd parameterised by the width. This question was previously open, because \mfd on width-1 DAGs (i.e.~paths) is trivially solvable, and the existing $\mathsf{NP}$-hardness proofs use DAGs of unbounded width. We show that \mfd on width-2 DAGs is already \NP-hard and that \mfd on width-3 DAGs is \textit{strongly} \NP-hard.

	Our main contribution complements these hardness bounds, as we show that \textit{weak} \NP-hardness is the best we can hope for on width-2 DAGs. In fact, we prove the more general statement that \mfd with unary coded input can be solved in quasi-polynomial time on DAGs of constant parallel-width, which includes width-2 DAGs. The parallel-width of a DAG $G$ ($\pw{G}$) was defined by Deligkas and Meir~[MFCS 2017] as the size of the largest minimal $s$-$t$ cut-set.
    We obtain these results by, a) interpreting flow decompositions as a sequence of certain digraph minor operations defined by Deligkas and Meir [MFCS 2017], and b) defining a new notion of width of a flow network, \textit{flow-width} of $(G,f)$, defined as the minimum number of paths covering all edges of $G$, where every edge $e$ can be covered by at most $f(e)$ paths. Using (a) and (b), we show as an intermediate result, an improved upper bound $(\lfloor\log \Vert f\Vert\rfloor+1) \cdot \pw{G}$ for \mfd, where $\Vert f\Vert$ is the largest flow weight of all edges. 

\end{abstract}


\newpage 

\section{Introduction}
Minimum Flow Decomposition (\mfd) of a given flow on a directed graph is the problem of finding a minimum sized set of weighted paths, such that for every edge, the sum of the weights of the paths crossing this edge is equal to the flow on the edge. It is a standard result that every flow on a directed acyclic graph can be decomposed to at most $m-n+2$ weighted paths~\cite{ahyja1993network,vatinlen2008simple}, where $n$ is the number of nodes and $m$ is the number of edges. In this paper, all graphs will be directed acyclic graphs (DAGs) that are multigraphs (that is, we allow parallel edges) with single source $s$ and single sink $t$, and we use the term \emph{flow network} to refer to a pair $(G, f)$ of a DAG $G$ and a flow $f$ on $G$. This paper addresses the variant \mfdn, where the flow and path weights are non-negative integers. We distinguish it from \mfdz, where the flow and path weights can also be negative; we use \mfd to refer to both variants and we write $\mfdsize{G}{f}$ for the size of the minimum flow decomposition using path weights in a set $\Y$.

\mfd is strongly \NP-hard, even on DAGs, and even when the flow values come from $\{1, 2, 4\}$ \cite{hartman2012split}. However, the problem has a wide range of applications, e.g. in Bioinformatics~\cite{bernard2014efficient, dias2023safety, gatter2019ryuto, pertea2015stringtie, tomescu2013novel, tomescu2015explaining, williams2019rna, baaijens2020strain,shao2017theory}, transportation planning~\cite{olsen2022study} or network design~\cite{hartman2012split, mumey2015parity, vatinlen2008simple}. 
Despite this widespread use, we lack a good theoretical understanding behind the complexity of \mfdn.
Moreover, as opposed to other \NP-hard problems which have been extensively studied on restricted classes of graphs, we have only little knowledge of graph classes that make the problem tractable or any other structural properties that make \mfdn easier. Below, we describe in more detail the current state of the art around \mfd.

\subsection{Related work} Kloster et al.~\cite{kloster2018practical} showed that \mfdn is in linear FPT time $2^{O(k^2)}\cdot(n + \log\Vert f\Vert)$, where the parameter $k = \mfdsizen{G}{f}$ is the size of the optimal solution and $\Vert f\Vert$ is the maximum norm of $f$ (i.e., the largest flow weight of all edges), however, it is not known whether this time complexity is tight. \mfdn also admits polynomially-sized Integer Linear Programming formulations~\cite{dias2022fast,grigorjew2023accelerating}. It is also known that \mfdn is $\mathsf{APX}$-hard (i.e., for some $\varepsilon > 0$ there is no $(1+\varepsilon)$-approximation unless $\mathsf{P}=\mathsf{NP}$)~\cite{hartman2012split}, and there exists an approximation algorithm that decomposes all but an $\varepsilon$ fraction of the flow with a factor of $O(1/\varepsilon^2)$~\cite{hartman2012split}. 

Recent theoretical progress is due to Cáceres et al.~\cite{caceres2024width}, who showed that the \textit{width} of the graph, namely, the minimum number of $s$-$t$ paths to cover all edges, play an important role in the decomposition of flows. The width is a natural lower bound to \mfd, since every flow decomposition is also such a path cover. For example, using width, they improved the approximation factor lower bound of the widely-used greedy approach for \mfdn\xspace-- to iteratively remove the currently highest weighted $s$-$t$ path in the graph~\cite{vatinlen2008simple,gatter2019ryuto,tomescu2013novel,bernard2014efficient,shao2017theory} -- to $\Omega(m/\log m)$ in the worst case~\cite{caceres2024width}. To obtain this, they exploited the fact that the width can grow exponentially after subtracting flow and removing edges of zero flow.

To partially address this issue, Cáceres et al.~\cite{caceres2024width} proved that restricting the input to \textit{width-stable} graphs, which Cáceres et al.~\cite{caceres2024width} defined as the class of $s$-$t$ DAGs $G$ whose $s$-$t$ subgraphs all have smaller or equal width to $G$, improves the approximation factor of the greedy algorithm to $O(\log\val{f})$, where $\val{f}$ is defined as the sum of the flow that leaves $s$.
A notable example of width-stable DAGs are \textit{series-parallel} DAGs; a numerous amount of $\mathsf{NP}$-hard problems are easier to solve on them, see e.g.~\cite{eppstein1992parallel, valdes1979recognition}. Despite such better approximation factor for greedy on width-stable graphs, \mfd remains strongly $\mathsf{NP}$-hard even in their subclass of series-parallel graphs~\cite{vatinlen2008simple}, and it is not known whether it can be decided in polynomial time whether a graph is width-stable.
Moreover, Cáceres et al.~\cite{caceres2024width} used the width to define an approximation algorithm for \mfdz of ratio $O(\log \Vert f\Vert)$ by repeatedly decomposing the flow using $\width{G} \leq \mfdsizez{G}{f}$ paths per iteration.

As we show in this work, \mfd is related to the \textsc{generating set} problem, in which we are given a set $A$ of positive integers and a parameter $k$, and the goal is to find set of positive integers $Z$ of size $k$, such that every integer in $A$ is the sum of a subset of $Z$. It can be solved in time $O^*(s^{\log s})$\footnote{The notation $O^*(\cdot)$ hides factors that are polynomial in the input size.}, as Collins et al.~\cite{COLLINS2007129} observe, the powers of two of a most $s$ are a satisfiable set for $Z$. The number of sets of size $\log s$ to verify is upper bounded by $s^{\lceil\log s\rceil}$, and each set can be verified by solving subset sum on each input integer. No faster exact algorithms are currently known. However, heuristic algorithms exist~\cite{lozano2016genetic, COLLINS2007129}.

\subsection{Our contributions}
\label{sec:contributions}
As opposed to other \NP-hard problems, we have little understanding of what makes \mfd easy to solve or hard. As such, in this paper we establish further connections between structural properties or parameters of the graph, and the tractability status of \mfdn in relation to these parameters.
We first show that instances of small width remain \NP-hard. This was previously open, because all reductions showing the hardness of \mfd use instances of arbitrary widths.
We then show how using two related parameters, the \textit{flow-width} of a DAG $G$ and a flow $f$, which we introduce in \Cref{sec:parity-fixing-algorithm}, and the \textit{parallel-width} $\pw{G}$, recently introduced by Deligkas and Meir~\cite{deligkas2017directed}, can make \mfdn easier to solve.

Investigating width parameters is motivated by the following. On width-1 graphs, \mfd is trivially solvable. Moreover, we observe that some real-world instances from Bioinformatics~\cite{tung2020top} admit small widths. Further, we reveal structural features of flow decompositions, as we demonstrate in the following.

\subparagraph{Improved hardness results.}
In \Cref{sec:mfd-hardness} we show that \mfdn remains \NP-hard on small width instances. It is known that it remains $\mathsf{NP}$-hard on width-stable graphs, since the $\mathsf{NP}$-hardness reduction from~\cite{vatinlen2008simple} uses series-parallel graphs, which are a subclass of width-stable graphs~\cite{caceres2024width}. However, their width is not bounded and it remained open whether having a bounded width makes the problem easier. We address this by proving the following.

\begin{restatable}{theorem}{thmconstantwidthhard}
    \mfdn on a DAG $G$ is strongly $\mathsf{NP}$-hard even when $\width{G} = 3$.
    \label{thm:constant-width-hard}
\end{restatable}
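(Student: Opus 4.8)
The plan is to reduce from a \emph{strongly} $\mathsf{NP}$-hard numerical problem, the natural candidate being \textsc{3-Partition}: given $3q$ positive integers $a_1,\dots,a_{3q}$ with $\sum_i a_i = qB$ and each $a_i \in (B/4, B/2)$, decide whether they can be partitioned into $q$ triples each summing to exactly $B$. The restriction $a_i \in (B/4,B/2)$ guarantees that every feasible group has exactly three elements. Since \textsc{3-Partition} is $\mathsf{NP}$-hard even when all $a_i$ are bounded by a polynomial in $q$, a polynomial-time reduction that produces a flow network $(G,f)$ with $\Vert f\Vert$ polynomial in the instance size, together with a threshold $k$ such that $\mfdsizen{G}{f}\le k$ if and only if the instance is a YES-instance, immediately yields strong $\mathsf{NP}$-hardness. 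The whole difficulty is to realise such a reduction inside a DAG of width exactly $3$.

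First I would design $G$ as a \emph{thin, serial} network: instead of laying out the $q$ groups or the $3q$ items in parallel (which is what drives the width up in the existing reductions of~\cite{hartman2012split,vatinlen2008simple}), I would chain the item gadgets one after another along a constant number of ``tracks'', so that the combinatorial state --- how much of the current group still needs to be filled, and which items remain available --- is carried on at most three edges crossing any $s$-$t$ cut. Concretely, each item $a_i$ becomes a small gadget offering two routing options (``assign $a_i$ to the current group'' versus ``skip it''), with the flow values on the gadget edges encoding $a_i$ and the running target $B$. To certify $\width{G}\le 3$ I would exhibit three explicit $s$-$t$ paths that together cover every edge of $G$; to certify $\width{G}=3$, so that the statement is tight, I would exhibit three edges that pairwise cannot lie on a common $s$-$t$ path (an edge-antichain of size three), which lower-bounds the minimum path cover.

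Correctness then splits into two directions. For the forward direction (YES $\Rightarrow$ small decomposition) I would take a valid partition into triples and turn each triple directly into one $s$-$t$ path whose weight is determined by the gadget encoding, verify via flow conservation that these $q$ paths (plus a fixed number of ``structural'' paths forced by the construction) sum to $f$ on every edge, and conclude $\mfdsizen{G}{f}\le k$. For the backward direction (small decomposition $\Rightarrow$ YES) I would argue that any decomposition meeting the threshold $k$ is forced, by the tightness of the chosen flow values, to route its paths so that each path collects a set of items whose values sum to exactly $B$, with no item reused across paths; the constraint $a_i\in(B/4,B/2)$ then forces exactly three items per path and hence a genuine $3$-partition. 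This is where the flow values must be shown to admit no ``cheating'' decomposition --- one that covers $f$ with at most $k$ paths yet does not correspond to any triple partition.

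I expect the backward direction, coupled with the simultaneous requirement that the width stay at $3$, to be the main obstacle. The tension is intrinsic: encoding independent choices (which item lands in which group) normally demands parallel structure and thus large width, whereas forcing $\width{G}\le 3$ demands serialising all choices through a bounded-width ``conveyor'', and one must then prove that this serialisation does not accidentally create cheaper decompositions. Concretely, making both inequalities $\mfdsizen{G}{f}\le k$ and $\mfdsizen{G}{f}\ge k$ tight exactly when a $3$-partition exists --- while keeping every weight polynomially bounded so that strong $\mathsf{NP}$-hardness is preserved --- is the delicate heart of the argument.
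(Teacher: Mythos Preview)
Your high-level plan is right---the paper also reduces from \textsc{3-Partition} and serialises the gadgets to keep the width at~$3$---but the one concrete design choice you commit to is the wrong one. You propose to turn each \emph{triple} into one $s$-$t$ path (so roughly $q$ ``content'' paths). A path carries a single weight, whereas a triple $\{a_i,a_j,a_k\}$ consists of three generally distinct values; you never say what flow values on the item gadgets would make a single weight-$w$ path correctly ``deposit'' $a_i$, $a_j$, $a_k$ on three different edges, and I do not see how to make that work without auxiliary paths that blow up the count or the width.

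The paper does the opposite: one path per \emph{item}, for a target of $3q+1$. The two devices you are missing are (i) scaling---the $i$-th item edge carries flow $(3q+2)a_i$ and each of the $q$ group edges carries $(3q+2)B$---and (ii) a zig-zag of \emph{diagonal} edges of flow~$1$ that threads all item edges (and all group edges) into a single serial chain. The diagonals are what pin the width to~$3$. In any decomposition of size $3q+1$, because every item edge has weight $\ge 3q+2$, weight-$1$ paths cannot saturate any item edge; hence at most one path can be spent on diagonals, and that single weight-$1$ path must cover \emph{all} of them. The remaining $3q$ paths are then forced into bijection with the item edges, each with weight $(3q+2)a_i$, and the group-edge value $(3q+2)B$ forces the items routed through a group to sum to~$B$. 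Note the instance is deliberately \emph{not} width-stable: once the diagonals are stripped, the residual graph has width $3q$. This ``weight-$1$ threading plus item-indexed paths'' is the missing idea in your sketch.
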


To prove this theorem, we devise a construction where, even if the initial graph width is $3$, the width can grow arbitrarily large after removing a path of any MFD.
Our next theorem shows that the problem remains $\mathsf{NP}$-hard on width-2 graphs when the flow is coded in binary with a reduction from the \textsc{generating set} problem~\cite{COLLINS2007129}.

\begin{restatable}{theorem}{thmconstantparwcomplexity}
    \mfdn on a DAG $G$ is $\mathsf{NP}$-hard when $f$ is coded in binary, even when $\width{G} = 2$.
    \label{thm:constant-parw-binary}
\end{restatable}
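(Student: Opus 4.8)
The plan is to prove hardness by first giving a structural reformulation of \mfdn on width-$2$ networks, and then reducing an $\mathsf{NP}$-hard number problem (PARTITION / SUBSET-SUM, which are hard precisely because the numbers are binary-coded) into that reformulation.

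First I would establish the reformulation. Since $\width{G}=2$, the graph $G$ is the union of two $s$-$t$ paths covering all edges, so every node has in- and out-degree at most $2$ and $G$ is a linear alternation of \emph{shared} segments and \emph{bubbles} (two internally disjoint branches between a split node and the following merge node); no two bubbles can be nested, as that would create a cut with three parallel edges and hence $\width{G}\ge 3$. On every shared edge the flow equals the total value $V=\val{f}$, and inside the $i$-th bubble it splits into $v_i$ and $V-v_i$. An $s$-$t$ path picks one branch in each bubble independently, so a flow decomposition is exactly a multiset of positive integer path weights $w_1,\dots,w_r$ with $\sum_j w_j=V$ such that, for every bubble $i$, the weights of the paths taking the first branch sum to $v_i$. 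Consequently $\mfdsizen{G}{f}$ equals the minimum size $r$ of a multiset of positive integers that sums to $V$ and in which every $v_i$ is a \emph{subset sum}. I would also argue the converse: any list of targets $v_1,\dots,v_k$ (with $0<v_i<V$) is realised by an explicit chain of $k$ parallel-edge bubbles with flows $v_i$ and $V-v_i$, a width-$2$ network whose flow is binary-coded when the $v_i$ and $V$ are.

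Then comes the reduction. The crucial observation is that the whole decomposition is controlled only through the shared weight multiset $W=\{w_1,\dots,w_r\}$ together with the global constraint $\sum W=V$; bubbles do not interact otherwise, since each target may choose its own subset. Exploiting the complement symmetry ($v$ is a subset sum of $W$ iff $V-v$ is), I would, for each input item, install a pair of bubbles with flows $t_i$ and $V-t_i$ and add high-order ``separator'' bits to the $t_i$ so that, within a budget of $r$ equal to the number of items, the only way to realise both flows is to place a single weight equal to $t_i$ or to $V-t_i$ into $W$. The global constraint $\sum W=V$ then turns the per-item binary choice ($t_i$ versus $V-t_i$) into a balancing/subset-sum condition on the low-order parts, which I would calibrate to coincide with the input PARTITION/SUBSET-SUM instance; a satisfying subset then corresponds to a budget-$r$ decomposition and vice versa.

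Correctness of the forward direction (YES $\Rightarrow$ small decomposition) is routine: a subset achieving the target yields the weight selection, and the complement subset realises every complementary bubble flow automatically because $\sum W=V$. The reverse direction is the main obstacle: I must prove the matching lower bound, namely that a NO-instance admits no decomposition within the budget. This requires showing that the separators force $W$ to consist of exactly one ``item weight'' per item, so that no weight may merge two items, split one, or introduce spurious subset-sum coincidences, and hence any budget-$r$ solution genuinely encodes a subset choice. The delicate point is reconciling two opposing needs: the separators that force this rigid structure must not, together with the complement/balance constraint, create unintended carries or subset sums in the low-order region that would let a NO-instance slip under the budget. I expect to resolve this by spacing the separator bits far apart and keeping the item magnitudes below the separator scale, so that high- and low-order arithmetic never interfere, and then arguing the lower bound block by block. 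Finally I would check that the construction keeps $\width{G}=2$ and that all flow values are binary of size polynomial in the input, so that the hardness is genuinely about binary coding --- consistent with \Cref{cor:constant-parw-unary}, which solves the unary case in quasi-polynomial time.
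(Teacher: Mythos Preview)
Your structural reformulation is correct and is in fact the heart of the matter: on a width-$2$ chain of bubbles with branch values $v_1,\dots,v_n$ and total value $V$, $\mfdsizen{G}{f}$ is exactly the minimum size of a (multi)set of positive integers summing to $V$ such that every $v_i$ is a subset sum. What you have rediscovered here is, up to the harmless sum constraint, the \textsc{Generating Set} problem. The paper exploits this directly: it reduces from \textsc{Generating Set} (known $\mathsf{NP}$-hard from Collins et al.), builds the bubble chain with top weights $a_i$ and total value $\sum a_i + 1$, and shows $\mfdsizen{G}{f}\le k+1$ iff a generating set of size $k$ exists. The extra $+1$ forces one path to use only bottom edges, and the remaining $k$ path weights are the generating set; multiset-to-set is handled by a short doubling argument. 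This keeps the whole proof to a paragraph.

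Your route instead tries to re-derive the hardness of the reformulated problem from \textsc{Partition}/\textsc{Subset-Sum} via separator bits. That is doable in principle (indeed the known hardness of \textsc{Generating Set} goes through \textsc{Subset-Sum}), but your specific gadget is not yet right: forcing $w_i\in\{t_i,\,V-t_i\}$ and imposing $\sum_i w_i=V$ gives $\sum_{i\in S}t_i-\sum_{i\notin S}t_i=(1-|\bar S|)V$, which does not isolate a balanced-partition condition; the ``calibrate the low-order bits'' step is where the real construction would have to live, and you have not supplied it. So the proposal is on the right track conceptually but leaves the actual reduction open, whereas naming the intermediate problem collapses the argument.
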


We will also show that this result is tight by addressing the case when the flow is coded in unary (i.e., the natural number $a$ is stored using $a+1$ bits) in the following.

\subparagraph{Improved parameterised runtime.}
Our main result regards an improved parameterised runtime upper bound for \mfd. We use the parallel-width of $G$ (written $\pw{G}$),
which is defined as the size of the largest minimal cut-set of $G$ and was introduced in the context of a directed graph minor notion on $s$-$t$ digraphs~\cite{deligkas2017directed}. The main result of the paper is the following.
\begin{restatable}{theorem}{corconstantparwunary}
    Let $c\in\N$ be a constant. \mfdn on $(G,f)$ can be solved in time $\Vert f\Vert^{O(\log\Vert f \Vert)} \cdot n$ when $\pw{G}\leq c$, which is quasi-polynomial if $f$ is coded in unary. \label{cor:constant-parw-unary}
\end{restatable}
Note that, contrary to graphs of width at least 3, all width-2 graphs admit a parallel-width of 2.
\Cref{thm:constant-width-hard}, \Cref{thm:constant-parw-binary} and \Cref{cor:constant-parw-unary} provide a tight description of \mfdn's computational complexity on constant width. Moreover, the results show that \mfdn on constant parallel-width naturally generalizes the \textsc{generating set} problem. To the best of our knowledge, this is the first time that the parallel-width has been used in the analysis of the computational complexity of a problem.

Kloster et al.~\cite{kloster2018practical} stated the open problem, whether there exists a $O^*(2^{o(k^2)})$ time FPT algorithm for \mfd. Our results make partial progress on this question, as the reduction used for \Cref{thm:constant-parw-binary} shows that the existence of such an algorithm implies that \textsc{generating set} can be solved in time $O^*(2^{o((\log s)^2)}))$. This is an unsolved problem, as previous solvers for \textsc{generating set} do not resolve this question~\cite{lozano2016genetic, COLLINS2007129}.

\subparagraph{Flow-width and graph minors.}
We now explain how we achieve \Cref{cor:constant-parw-unary}.
Our main tool is to analyse the flow network structure in \Cref{sec:parity-fixing-algorithm}, using the \emph{flow-width} $\fwidth{G}{f}$, a new parameter that we define and which might be of independent interest. In contrast to \mfdz, the DAG width faces the drawback of covering an edge too often: given a positive flow $f$ on $G$, it might be necessary for all minimum path covers to cover an edge more than $f(e)$ times. The flow-width uses the given flow $f$ as upper bounds for the number of times covering paths can use edges, and satisfies $\width{G} \leq \fwidth{G}{f}$ for all positive flows $f$. Moreover, the flow-width is still a natural lower bound to the optimal solution of \mfdn.

We begin by showing that the flow-width is monotone on the flow, if and only if the underlying DAG is width-stable, and that the flow-width is a more fine-grained parameter to analyse the width-stability compared to width alone. This implies the following corollary, improving the best known approximation ratio of \mfdn on width-stable DAGs from $O(\log \val{f})$~\cite{caceres2024width} to $O(\log \Vert f\Vert)$:

\begin{restatable}{corollary}{thmapproxalgows}
    On width-stable DAGs we can approximate \mfdn given the input $(G, f)$ with ratio $\lfloor\log \Vert f\Vert\rfloor+1$ in time $O(nm\log\Vert f\Vert)$.
    \label{thm:approx-algo-ws}
\end{restatable}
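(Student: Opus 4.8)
The plan is to turn the parity-fixing decomposition of \Cref{thm:approx} into a weighted path decomposition of $f$, and to control its size through the flow-width of the intermediate flows. Concretely, I would first apply \Cref{thm:approx} to write $f = \sum_{i=0}^{\lfloor\log\Vert f\Vert\rfloor} 2^i f_i$ with $\val{f_i} = \mfdsizeo{G}{f_i} \le \fwidth{G}{f^{(i)}}$. Since each $f_i$ admits a decomposition into exactly $\val{f_i}$ unit-weight paths, scaling these paths by $2^i$ and collecting them over all $i$ yields a valid flow decomposition of $f$ using $\sum_{i=0}^{\lfloor\log\Vert f\Vert\rfloor}\val{f_i}$ weighted paths. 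The whole task then reduces to bounding this sum by $(\lfloor\log\Vert f\Vert\rfloor+1)\cdot\mfdsizen{G}{f}$.

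The key structural step, and the heart of the argument, is the characterisation that a DAG is width-stable if and only if the flow-width is monotone, i.e. $\fwidth{G}{g}\le\fwidth{G}{h}$ whenever $g\le h$. I would establish this equivalence first, since the definition of width-stability is dynamic (the width must not grow as the greedy process saturates edges) whereas monotonicity is a static property of $\fwidth{\cdot}{\cdot}$; relating the two requires arguing that saturating edges corresponds to passing to a sub-flow, and that a minimum path cover of the support of any such sub-flow stays within $\width{G}$. I expect this equivalence to be the main obstacle.

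Granting monotonicity, I would derive the clean consequence that $\fwidth{G}{g}=\width{G}$ for every positive flow $g$ on a width-stable $G$. The inequality $\width{G}\le\fwidth{G}{g}$ holds by \Cref{def:flow-width}. For the reverse, fix a minimum path cover $\mathcal{P}$ of $G$ with $|\mathcal{P}|=\width{G}$ and let $g_N$ be the flow routing weight $N$ along each path of $\mathcal{P}$; for $N\ge\Vert g\Vert$ we have $g\le g_N$, and $\mathcal{P}$ itself witnesses $\fwidth{G}{g_N}\le\width{G}$, so monotonicity gives $\fwidth{G}{g}\le\fwidth{G}{g_N}\le\width{G}$.

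Finally I would assemble the bounds. Since $f^{(i)}=f-\sum_{j<i}2^j f_j\le f$ (the subtracted terms are nonnegative), monotonicity together with the previous step gives $\val{f_i}\le\fwidth{G}{f^{(i)}}\le\fwidth{G}{f}=\width{G}$ for every $i$, so the total number of paths produced is at most $(\lfloor\log\Vert f\Vert\rfloor+1)\width{G}$. Because the paths of any \mfdn solution form a path cover of the support of $f$ (each positively weighted edge has its flow split among paths of weight at least one), we have $\width{G}\le\mfdsizen{G}{f}$, which yields the claimed ratio $\lfloor\log\Vert f\Vert\rfloor+1$. For the running time, \Cref{thm:approx} produces the flows $f_i$ in $O(nm\log\Vert f\Vert)$, and in each of the $O(\log\Vert f\Vert)$ rounds extracting the $\val{f_i}\le\width{G}$ unit paths of $f_i$ costs $O(n)$ per path by tracing it through the DAG, i.e. $O(n\,\width{G})$ per round; summing over the rounds gives the stated $O(n\log\Vert f\Vert\cdot(\width{G}+m))$.
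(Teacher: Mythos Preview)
Your proposal is correct and follows essentially the same approach as the paper: apply \Cref{thm:approx}, bound each $\fwidth{G}{f^{(i)}}$ using width-stability, and compare against $\mfdsizen{G}{f}$; the runtime analysis is also identical. The only cosmetic difference is that the paper invokes \Cref{lem:fwstable} directly (giving $\fwidth{G}{f^{(i)}}=\width{G|_{f^{(i)}}}\le\width{G|_f}\le\mfdsizen{G}{f}$), whereas you first pass through the monotonicity characterisation and then deduce $\fwidth{G}{f}=\width{G}$---note that you actually only need the forward implication (width-stable $\Rightarrow$ monotone flow-width), not the full equivalence, and the paper's route via $\width{G|_f}$ has the small advantage of not requiring $f>0$.
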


Next, we argue that the graph minors defined by Deligkas and Meir~\cite{deligkas2017directed} correspond to possible edge saturations of flow decompositions in \Cref{sec:approx-alg}. This enables us to use the parallel-width as parameter, which was defined in the context of these graph minors.
The flow-width plays a central role in this context, as it connects the width and the parallel-width:

\begin{restatable}{lemma}{lempwidthwidthfwidth}
For all $s$-$t$ DAGs $G$, the following equalities hold:
\begin{align*}
    \width{G} & = \min\{\fwidth{G}{f} \mid f > 0 \},\\
    \pw{G} & = \max\{\fwidth{G}{f} \mid f\geq 0 \}.
\end{align*}
\label{lem:pwidth-width-fwidth}
\end{restatable}
While $\pw{G}$ is \NP-hard to compute, it admits a polynomial-time algorithm if it is constant~\cite{deligkas2017directed}.
Using \Cref{lem:pwidth-width-fwidth}, we obtain a new upper bound for \mfdn, which we will use to prove \Cref{cor:constant-parw-unary} and is of independent interest:

\begin{restatable}{lemma}{thmapproxalgogeneral}
    Given a flow network $(G,f)$ with $f>0$, we can decompose the flow using at most $\pw{G} \cdot (\lfloor\log\Vert f\Vert\rfloor+1)$ paths in time $O(nm\log\Vert f\Vert)$.
    \label{thm:approx-algo-general}
\end{restatable}

\section{Preliminaries}

By default, graphs $G = (V(G), E(G))$ are assumed to be acyclic digraphs with a single source $s$ and a single sink $t$ (an \emph{$s$-$t$ DAG}), and all subgraphs of $G$ that we consider are $s$-$t$ DAGs. We use $n$ and $m$ to denote the number of vertices and edges, respectively, and we denote by $\deg^+(v)$ and $\deg^-(v)$ the out- and in-degree of a vertex $v$, respectively. A set of edges $A$ is called a cut-set if every $s$-$t$ path includes an edge of $A$. We denote by $[k]$ the set $\{1,\dots,k\}$. We call functions $f:E(G)\to\Y$ \textit{pseudo-flows}\footnote{Commonly in the literature, (pseudo-)flows are also required to be skew-symmetric and to be upper bounded by some capacity function on the edges. These properties play no role in this article.} on $G$, where $\Y\in\{\N,\Z\}$. We use the notation $f + g$ and $\mu f$ to denote the element wise sum of pseudo-flows and scalar multiplication. The value $0$ may (depending on context) denote the pseudo-flow that is equal $0$ on every edge. We write $f\leq g$ (resp.\ $f<g$) to denote $f(e)\leq g(e)$ (resp.\ $f(e) < g(e)$) for every edge $e\in E(G)$ and two pseudo-flows $f,g$ on $G$.

A \textit{flow} on $G$ is a pseudo-flow whose internal vertices $V\setminus\{s,t\}$ satisfy the flow conservation property (incoming flow is equal to the outgoing flow). The sum of two flows $f + g$, the multiplication $\mu f$ of a flow $f$ with a scalar $\mu$ and the pseudo-flow $0$ are flows. We denote by the flow value $\val{f}$ of $f$ the sum of the outgoing flow of $s$, and we call $f(e)$ the weight of edge $e$. We call the pair $(G, f)$ of an $s$-$t$ DAG $G$ and a flow $f$ a flow network. Given an $s$-$t$ path $P$, we also denote by $P$ the indicator flow of the path, that is, $P(e) = 1$ for all $e\in P$ and $P(e) = 0$ otherwise. We say that a path $P$ in $(G, f)$ carries flow $\mu$ if $f(e) \geq \mu$ for all $e \in P$, and we define that any $v-v$ path carries flow $\infty$. We also say that $f$ covers an edge $e$ (or a set of edges $A$) if $f(e) \neq 0$ (for all $e \in A$). Given lower- and upper bounds $L,R:E \to \Y\cup\{\infty\}$ , we call a flow $f:E \to \Y$ a minimum flow with respect to $L$ and $R$, if $L \leq f \leq R$ and the value of $f$ is minimised. If $L(e) \in \{0,1\}$ and $R(e)=\infty$ for all $e\in E$, we call $f$ a minimum covering flow.

\begin{definition}
    Given a flow $f:E(G)\to\Y$ on $G$, a \textit{flow decomposition} of size $k$ of $f$ is a family of $s$-$t$ paths $\mathcal{P} = (P_1,\dots,P_k)$ and weights $(w_1,\dots,w_k)\in\Y^k$ such that $f = w_1P_1 + \dots + w_kP_k$.
\end{definition}

\begin{definition}
    For a flow $f$ on $G$, let $\mfdsize{G}{f}$ be the size of a smallest flow decomposition of $f$ using weights in $\Y$. We denote by \mfdn and \mfdz the problems of finding a flow decomposition of smallest size for $\Y = \N$ and $\Y = \Z$, respectively.
\end{definition}

We let $\Vert f\Vert = \max_{e \in E} |f(e)|$ denote the maximum norm on flows, and write $f\equiv_2 g$ if and only if the flows $f$ and $g$ have the same parity on all edges, that is, for all edges $e\in E(G)$, $f(e)$ is odd if and only if $g(e)$ is odd. An important tool we use to analyse graph structure is the \textit{width}:
\begin{definition}
    \label{def:width}
    We define $\width{G}$ as the minimum number of $s$-$t$ paths in a DAG $G$ needed to cover all edges in $E(G)$.
\end{definition}
\Cref{def:width} is a variant of the more common problem of finding a minimum number of paths to cover all vertices.
Sets of paths and flows are equivalent in the sense that one can be transformed into the other. Given a set of $s$-$t$ paths $\mathcal{P}$ on $G$, we can define a unique flow $f_\mathcal{P} = \sum_{p\in\mathcal{P}} p$ that counts the number of paths on every edge. Conversely, given a flow $f:E(G)\to\N$, we can define a path cover $\mathcal{P}_f$ by the \textit{Flow Decomposition Theorem}~\cite{ahyja1993network}, simply by decomposing the flow into weight $1$ paths. It holds that $\width{G} = \min \{ \val{f} \mid f(e) > 0\, \forall e\in E(G)\}$ is equal to the value of a minimum covering flow. We say that a path cover induces a flow and vice versa.
We can find such a minimum covering flow in $O(m^{1+o(1)})$ time~\cite{chen2025maximum}, and find the induced path cover in additional time $O(\width{G} \cdot n)$, since in DAGs, every path is of length at most $n-1$.

In this paper, we show a connection of \mfdn to the following problem.
\begin{definition}
    In the \textsc{generating set} problem, we are given a set of positive integers $A=\{a_1,\dots,a_n\}$ and a positive integer $k$, and we want to decide if there is a set of positive integers $Z=\{z_1,...,z_k\}$ such that every element $a \in A$ is the sum of some subset of the elements in $Z$.
    \label{def:gen-set}
\end{definition}
The \textsc{generating set} problem is known to be $\mathsf{NP}$-hard \cite{COLLINS2007129} and the proof of that uses integers that grow exponentially. 

\section{\mfd hardness results}
\label{sec:mfd-hardness}

In this section, we analyse the parameterised complexity of \mfdn using the width. DAGs $G$ of $\width{G} = 1$ are $s$-$t$ paths, which obtain a unique flow decomposition. However, all known hardness reductions use arbitrarily large widths.
A previous reduction \cite{vatinlen2008simple} showed that width-stable \mfdn instances are strongly $\mathsf{NP}$-hard. 

\thmconstantwidthhard*
\begin{proof}
    \begin{figure}[h]
        \centering
        \includegraphics[scale= 0.85, trim= 40 270 300 80,clip]{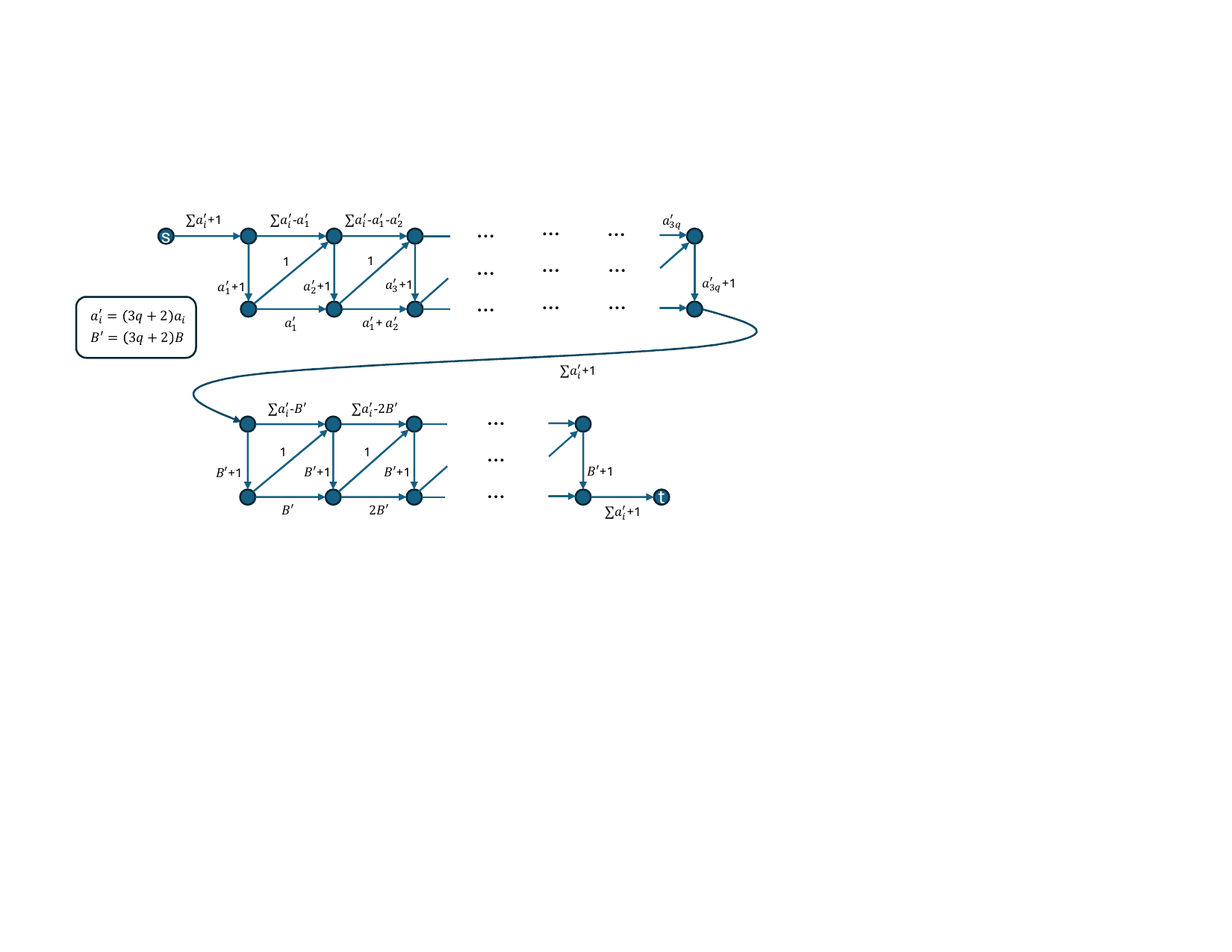}
        \caption{An \mfdn instance of width 3 for solving \textsc{3-partition} problem.}
        \label{fig:width-3-mfd-3-partition}
    \end{figure}
    We will show a reduction from the \textsc{3-partition} problem to \mfdn on $G$ of width 3. Let $a_1,\dots,a_{3q},B \in \N$ such that $a_i \in (B/4,B/2)$ (with this assumption, \textsc{3-partition} remains \NP-hard~\cite{michael1979computers}). We want to find a partition of the $a_1, \dotsc, a_{3q}$ to sets $S_1, \dotsc, S_q$ such that, for all $i$, $\sum_{a \in S_i} a = B$. Note that this restriction implies that $\forall i: |S_i|=3$.
    
    Consider the \mfdn instance $G$ in \Cref{fig:width-3-mfd-3-partition}, which we can divide into the top and the bottom components as illustrated. Each vertical edge in the top component represents an $a_i$ and each vertical edge in the bottom component represents $B$. We claim that there is a solution to the \textsc{3-partition} problem if and only if this \mfdn instance on $G$ has a solution of size $3q+1$.

    First, we show that, if we have a solution to the \textsc{3-partition} problem, then we can decompose $G$ into $3q+1$ weighted paths. This can be done by routing one path of weight 1 to saturate all the diagonal edges. Then, for each $a_i \in S_j$, we route a path of weight $(3q+2)a_i$ through the $i^{th}$ vertical edge in the top component and through the $j^{th}$ vertical edge in the bottom component.

    Now, we will show that, if we can decompose $G$ into $3q+1$ weighted paths, then we can construct a solution to the \textsc{3-partition}. Let $F = \{(P_i, w_i) \mid i \in [3q+1]\}$ be the set of weighted paths in the solution in a non-decreasing order of $w_i$. Let $U \subseteq F$ be the set of paths with a weight of 1 each. Note that all the diagonal edges must be saturated by $U$. Since $|U| \leq 3q+1$ and each vertical edge in the top component has a flow of at least $3q+2$, they are not saturated by $U$. This means that we need at least $3q$ paths to saturate all vertical edges in the top component, or in other words, $|F\setminus U| \geq 3q$. Hence, $|U|=1$, and a path of weight $1$ routes through all diagonal edges in both the top and bottom components. For the remaining $3q$ paths, since each path can only pass one vertical edge in the top components, it must saturate one vertical edge in the top component. To construct a solution to the \textsc{3-partition} problem, we put $a_i$ in a set $S_j$ if the path that saturates the $i^{th}$ vertical edge in the top component uses the $j^{th}$ vertical edge in the bottom component.
\end{proof}

The instances that we use to prove \Cref{thm:constant-width-hard} are not width-stable because after removing the diagonal edges using a single path of weight $1$, we obtain the \mfdn instances of the known \textsc{3-partition} reduction of width $3q$~\cite{vatinlen2008simple}.
We now show a reduction from \textsc{generating set} to show weak \NP-hardness of \mfdn.

\thmconstantparwcomplexity*
\begin{proof}
\begin{figure}[h]
    \centering
    \includegraphics[scale= 0.9, trim= 40 290 300 240,clip]{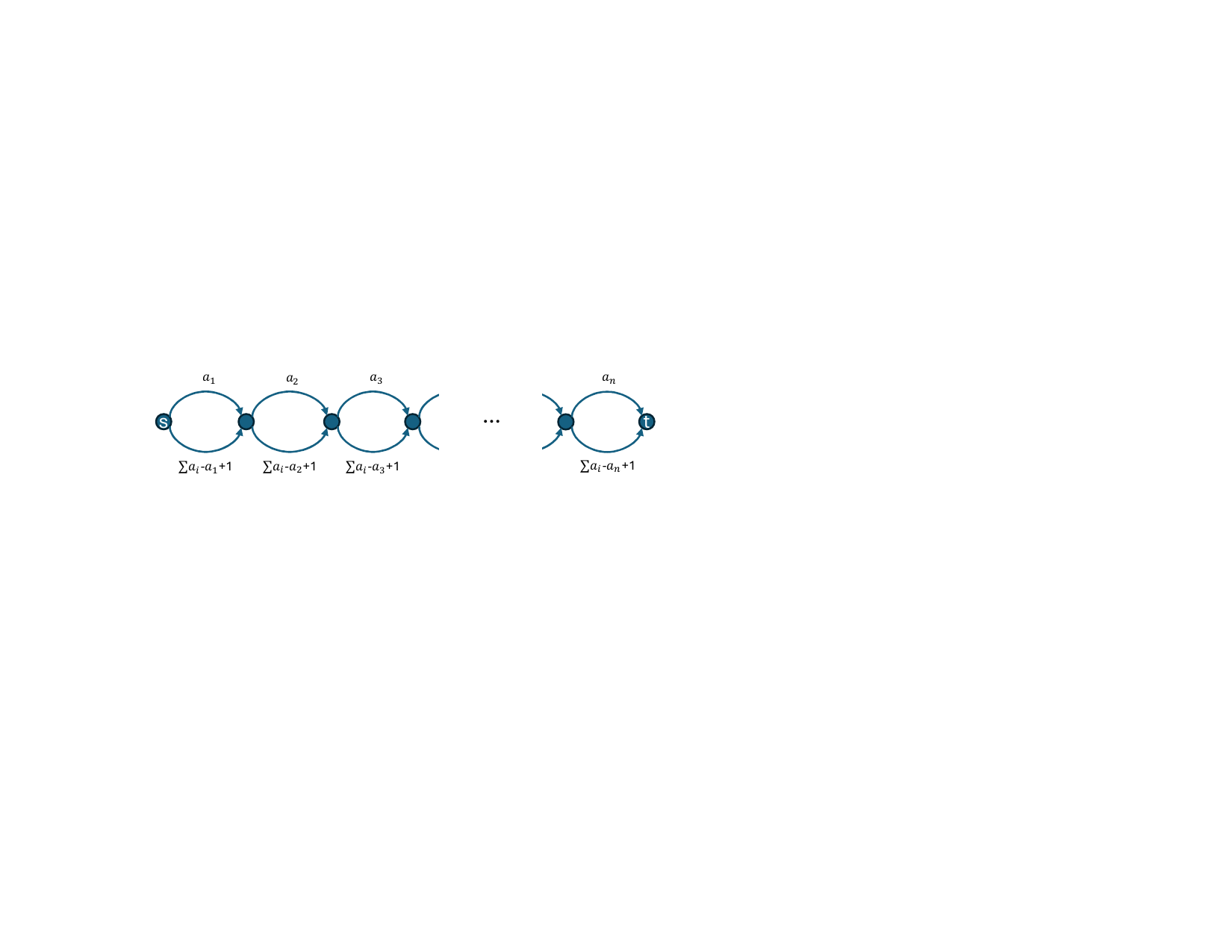}
    \caption{An \mfdn instance of width 2 for solving \textsc{generating set} problem. Note that this multigraph can be turned into a graph by subdividing each edge.}
    \label{fig:width-2-mfd-genset}
\end{figure}
We will show a reduction from the \textsc{generating set} problem to \mfdn on $G$ of width 2.
Consider a \textsc{generating set} instance $\{ a_1, \dots, a_n \}$ and the \mfdn instance in \Cref{fig:width-2-mfd-genset}. Let $e^{(t)}_i$ and $e^{(b)}_i$ be the $i^{th}$ top and bottom edge from the left, respectively. We construct $G$ of width $2$ by using $a_i$ as a weight of $e^{(t)}_i$. We also let the total ($s$-$t$)-flow have a value of $\sum_{i=1}^n a_i+1$. We will show that \textsc{generating set} problem has a solution of size $k$ if and only if \mfdn on $G$ has a solution of size $k+1$.

First, we show that, if we have a solution of \textsc{generating set} of size $k$, we can obtain a solution of \mfdn of size $k+1$. Let $Z=\{z_1,...,z_k\}$ be a solution of \textsc{generating set}. We have that, for all $i \in [n]$, there is $\chi_{ij}\in \{0,1\}^{n \times k}$ such that $a_i = \sum_{j=1}^k \chi_{ij} z_j$. Let w.l.o.g. $\sum_{i=1}^n \chi_{ij} \geq 1$ for all $j \in [k]$. In the corresponding \mfdn solution, for $j \in [k]$, we route a path $P_j$ of weight $z_j$ via $e^{(t)}_i$ when $\chi_{ij}=1$, and route $P_j$ via $e^{(b)}_i$ when $\chi_{ij}=0$. After we route $P_1, \dots, P_k$, all the top edges will be saturated. Finally, we route $P_{k+1}$ of remaining weight via all bottom edges.

Next, we show that, if we have a solution of \mfdn of size $k+1$, we can obtain a solution of \textsc{generating set} of size $k$. Let $\{(P_1,w_1),...,(P_{k+1},w_{k+1})\}$ be the \mfdn solution. Note that the total flow in this instance has weight $\sum_{i \in [n]}a_i+1$, and the total of the weight of all $P_j$ that use at least one top edge is at most $\sum_{i \in [n]} a_i$. Since the total flow is strictly more than the total weight of all $P_i$ that use at least one top edge, there must be one path in our solution that uses only bottom edges. W.l.o.g, Let $P_{k+1}$ be such a path. Notice that, for the remaining $k$ paths $P_1,..., P_k$, if all their weights are distinct, we claim that $\{w_1,...,w_k\}$ is a solution of \textsc{generating set}. This can be done by setting $\chi_{ij}=1$ when $P_j$ is routed via $e^{(t)}_i$, and $\chi_{ij}=0$ when $P_j$ is routed via $e^{(b)}_i$. Now, when their weights are not distinct, we have a multiset that satisfies the \textsc{generating set} problem. We can turn this multiset into a set by the following. Let $w$ be the smallest positive integer in this multiset and $d$ be the number of copies. We replace these integers with $w, 2w, ..., dw$ and adjust $\chi$ accordingly. We can repeat this process until all integers are distinct. The process is terminated in at most $k$ rounds since we obtain at least one distinct element $w$ after each round. 
\end{proof}

%
%

In the following sections, we show that \mfdn can, like \textsc{generating set}, be solved in quasi-polynomial time, \mfdn if the instance is of width 2 and has a unary coded integer input. More precisely, we will show a more general result using the \textit{parallel-width} (see \Cref{def:parallel-width}) introduced by Deligkas and Meir~\cite{deligkas2017directed}.

\section{Decomposing via flow-width}
\label{sec:parity-fixing-algorithm}
Suppose we are decomposing a flow path by path. During this decomposition, flow is subtracted from edges until they are saturated and cannot be traversed by further weighted paths any more. We mitigate oversaturating an edge by introducing the flow-width of a flow network. We then reformulate a heuristic given by Mumey et al.~\cite{mumey2015parity} using the flow-width.

\subsection{Flow-width}
\label{sec:flow--width}
While $\width{G}$ is always a lower bound to $\mfdsizen{G}{f}$ for flows $f>0$, we have the problem that minimum path covers might have to cover an edge $e$ at least $\mu > f(e)$ times (see \Cref{fig:minimal-minimum} for an example). A more accurate lower bound to \mfdn is thus a minimum path cover that respects upper bounds defined by flows.

\begin{figure}
    \centering
    \includegraphics[width=0.4\textwidth]{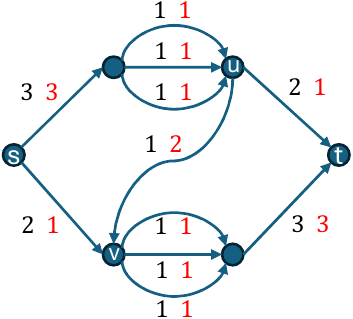}
    \caption{An example of minimally covering flow (black) and minimum covering flow (red). Note that the value of the black flow $(5)$ is larger than the value of the red flow $(4)$, despite being smaller in the central edge.}
    \label{fig:minimal-minimum}
\end{figure}

\begin{definition}
    Let $G$ be a DAG and $f$ be an integral non-negative flow on $G$. We define the \textit{flow-width} of $G$ and $f$, $\fwidth{G}{f}$, as the smallest number of paths satisfying the following properties:
    \begin{description}
        \item[1. Covering] Every edge $e\in E(G)$ with $f(e) > 0$ appears in at least one path, and
        \item[2. Upper bounds] Every edge $e\in E(G)$ appears in at most $f(e)$ paths.
    \end{description}
    \label{def:flow-width}
\end{definition}

Similarly to computing the width, the flow-width $\fwidth{G}{f}$ can be computed by finding decrementing paths from $f$, or by finding a minimum-cost flow. Closely related to the flow-width are \textit{minimally covering flows}:
\begin{definition}
    A non-negative flow $f$ is minimally covering if all $s$-$t$ paths in $G$ carry at most flow $1$.
\end{definition}
Equivalently, a non-negative flow $f$ is minimally covering if every flow $g\leq f$ that covers the same set of edges is equal to $f$. In other words, $\fwidth{G}{f}$ is the value of a minimally covering flow.

The flow-width can be applied to flow networks whose flow has weight $0$ on some edges. These edges are excluded from the covering, and we only work with the edges that have positive flow. 
For the analysis of relevant graph structure, we consider the following class of subgraphs.
\begin{definition}[\cite{caceres2024width}]
    Let $G$ be an $s$-$t$ DAG and $f:E\to\mathbb{N}$ a flow. The \textit{flow-subgraph} $G|_f = (V|_f, E|_f)$ of $G$ is defined by $V|_f = V \setminus \{ v \in V \mid \sum_{u:(u,v)\in E(G)} f(u,v) = \sum_{u:(v,u)\in E(G)} f(v,u) = 0 \}$ and $E|_f = \{ e \in E \mid f(e) > 0 \}$.
\end{definition}

\begin{lemma}[Properties of flow-widths]
    Let $G$ be an $s$-$t$ DAG.
    \begin{enumerate}
        \item For all flows $f \geq 0$, there exists a minimally covering flow $g$ with $0 \leq g \leq f$, $G|_f = G|_g$ and $\val{g} = \fwidth{G}{f}$, \label{lem:mcf-always-exists}
        \item For all flows $f \geq 0$, $\width{G|_f}\leq\fwidth{G}{f}\leq\mfdsizen{G}{f}$, \label{lem:fw-in-between}
        \item For all flows $g\geq f \geq 0$ with $G|_g = G|_f$, $\fwidth{G}{f} \geq \fwidth{G}{g}$. \label{lem:fw-grows-for-smaller-flows}
    \end{enumerate}
    \label{lem:fw-first-properties}
\end{lemma}
\begin{proof}
    We prove all given statements one by one.
    \begin{enumerate}
        \item We show that such a flow $g$ can be found by solving the following linear program (LP):
        \begin{equation}
        \begin{aligned}
        &\min \val{g}, \text{subject to}\\
        &g \text{ is a flow on $G$},\\
        &0 \leq g \leq f,\\
        &g(e) \geq 1 \text{ for all $e\in E(G)$ where $f(e) \geq 1$}.\\
        \label{eq:minally-covering-flow-Gf}
        \end{aligned}
        \end{equation}
        Note that the LP is always feasible, as $f$ itself satisfies all the conditions. Moreover, every optimal solution $g$ to the LP is an integer flow as $f$ is an integer flow. Finally, every optimal solution $g$ is a minimally covering flow, because if it is not, there exists an $s$-$t$ path carrying strictly more flow than $1$. Reducing the flow along this path until it only carries flow $1$, we obtain a flow of smaller value covering the same set of edges.
        \item The first inequality of Property \ref{lem:fw-in-between} holds because the upper bounds in the definition of the width are $\infty$ and thus larger than for the flow-width. The second inequality holds, because every positive flow decomposition is a path cover whose number of paths on every edge is upper bounded by the flow on that edge.
        \item To show that Property \ref{lem:fw-grows-for-smaller-flows} holds, note that because $G|_g = G|_f$, the same set of edges have to be covered for both flows, but the upper bounds defined by $f$ are stricter. This means that the set of path covers that satisfy the Covering and Upper bounds properties in \Cref{def:flow-width} for $f$ also satisfy them for $g$.
    \end{enumerate}
\end{proof}
In general, it can happen that $\width{G|_f} > \width{G|_g}$ for flows $f < g$. We thus consider the class of \textit{width-stable} DAGs, whose width does not grow when removing weighted paths from the flow:
\begin{definition}[\cite{caceres2024width}]
    The class of \textit{width-stable} DAGs is defined as all $G$ that satisfy $\width{G|_f} \leq \width{G|_g}$ for all flows $0\leq f\leq g$. 
\end{definition}

Width-stable DAGs have been characterised using funnels~\cite{garlet2020efficient}, which are DAGs that generalise in/out-forests:
\begin{definition}\
    \begin{enumerate}
        \item An $s$-$t$ DAG $F$ is called a \textit{funnel} if along any $s$-$t$ path the vertices $v$ first satisfy $\deg^-(v)\leq 1\leq \deg^+(v)$ and then $\deg^-(v)\geq 1\geq \deg^+(v)$. Equivalently, an $s$-$t$ DAG is a funnel, if every $s$-$t$ path has a \textit{private} edge, i.e., an edge, which is not contained in any other $s$-$t$ path~{\normalfont\cite{garlet2020efficient}}.
        \item For a flow-subgraph $F$ of an $s$-$t$ DAG $G$, we call a $u$-$v$ path $P_F$ in $G$ a \textit{central path}, if $F$ is a funnel, $u,v\in F$, and if inside $F$, $\deg^-(u) > 1$ and $\deg^+(v) > 1$. We write that $G$ contains a funnel subgraph $F$ with central path $P$~{\normalfont\cite{caceres2024width}}.
    \end{enumerate}
\end{definition}

In \Cref{fig:minimal-minimum}, the graph is a funnel with a $u$-$v$ central path consisting of a single edge. Note that this DAG without the edge $(u,v)$ is the funnel-subgraph $F$, while $(u,v)$ as the central path is \textit{not} part of $F$.

\begin{lemma}[\cite{caceres2024width}, Lemma 13]
    Let $G$ be an $s$-$t$ DAG. The following are equivalent.
    \begin{enumerate}
        \item $G$ is width-stable,
        \item For any flow $f \geq 0$ on $G$, there exists an $s$-$t$ path in $G|_f$ carrying flow $\val{f}/\width{G|_f}$,
        \item $G$ has no funnel subgraph with central path.
    \end{enumerate}
    \label{lem:wstable-funnel-char}
\end{lemma}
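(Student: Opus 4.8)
The plan is to establish the three-way equivalence as a cycle $(1)\Rightarrow(2)\Rightarrow(3)\Rightarrow(1)$, proving the first two implications in contrapositive form. The common theme tying the three conditions together is a \emph{re-branching bottleneck}: a location where flow arriving on at least two edges is squeezed through a shared path and then fans back out onto at least two edges. The tools I rely on are the min--max description $\width{G_f}=\min\{\val h\mid h>0\text{ on }E(G_f)\}$ of the width as a minimum covering-flow value, the bottleneck (widest) $s$-$t$ path, and the layered structure of funnels, in which on every $s$-$t$ path all branching vertices (in-degree $\le 1$) precede all merging vertices (out-degree $\le 1$).

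For $(1)\Rightarrow(2)$ I prove $\neg(2)\Rightarrow\neg(1)$. Suppose a flow $f\ge 0$ admits no $s$-$t$ path of $G_f$ carrying weight $\theta:=\val f/\width{G_f}$. Then every $s$-$t$ path meets a \emph{light} edge with $f(e)<\theta$, so the light edges form an $s$-$t$ cut; fixing a minimal such cut $(S,\bar S)$, its forward edges are all light, and since the net flow across it equals $\val f$ the forward flow is at least $\val f$, whence there are more than $\val f/\theta=\width{G_f}$ forward edges. If this cut had no backward edges, every $s$-$t$ path would cross it exactly once, so no two forward edges could lie on a common path; they would form an edge-antichain of size exceeding $\width{G_f}$, which is impossible. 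Hence the cut must contain backward edges along which a single covering path can weave across several forward edges. I then zero out the flow on these weaving edges to obtain a flow $f'\le f$ whose support strands the forward edges onto separate paths, forcing $\width{G_{f'}}>\width{G_f}$ and witnessing that $G$ is not width-stable.

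For $(2)\Rightarrow(3)$ I prove $\neg(3)\Rightarrow\neg(2)$, so assume $G$ contains a funnel subgraph $F$ with a central $u$-$v$ path: $\deg_F^-(u)>1$, $\deg_F^+(v)>1$, the vertices $u$ and $v$ are incomparable inside $F$, and they are joined by a path $u\rightsquigarrow v$ in $G$. This is exactly the re-branching bottleneck of the previous paragraph, with the central path playing the role of the weaving edges. I build a flow that sends flow into $u$ along its (at least two) in-edges of $F$, routes it through the central path, and out of $v$ along its (at least two) out-edges of $F$, while loading a separate heavy branch of the funnel so as to inflate $\val f$. The incoming and outgoing pairs, being coverable only by weaving through the central path, keep $\width{G_f}$ small, whereas the heavy branch makes $\val f/\width{G_f}$ large; one checks the weights can be chosen so that every $s$-$t$ path is pinned to a light side edge below this average, so $f$ witnesses the failure of $(2)$.

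Finally, $(3)\Rightarrow(1)$ is the structural core: I show that without a central funnel path the width cannot increase when a flow shrinks. The idea is that any width increase in passing from $G_g$ to a flow-subgraph $G_f$ with $f\le g$ must be caused by deleting edges that let covering paths weave across a cut, and such weaving embeds a funnel subgraph whose merging vertex reaches a distinct branching vertex through the weaving path --- precisely a central path. Excluding this structure, I argue that a minimum path cover of $G_g$ restricts to a valid cover of $G_f$, giving $\width{G_f}\le\width{G_g}$. The main obstacle is this last implication, together with the delicate weight choices of $(2)\Rightarrow(3)$: one must show that \emph{every} width jump is certified by a local funnel-with-central-path obstruction, being careful that width-stability constrains only genuine flow-subgraphs rather than arbitrary edge sets, and that in a funnel the vertices $u$ and $v$ of a central path necessarily lie on parallel branches joined only by the extra path in $G$.
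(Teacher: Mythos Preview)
The paper does not prove this lemma at all: it is quoted from C\'aceres et~al.~\cite{caceres2024width} (their Lemma~13) and used as a black box, so there is no in-paper argument to compare your proposal against.

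On the proposal itself, the cycle $(1)\Rightarrow(2)\Rightarrow(3)\Rightarrow(1)$ and the ``re-branching bottleneck'' intuition are sound, but two of your three implications are not yet arguments. In $\neg(2)\Rightarrow\neg(1)$ your cut computation is correct up to the point where you conclude that backward edges must exist; however, ``zeroing out the flow on these weaving edges'' does not yield a flow, and you give no construction of a genuine flow $f'\le f$ whose support contains every forward light edge yet none of the backward ones (removing the backward edges may disconnect vertices from $s$ or $t$, so the remaining edge set need not be a flow-subgraph at all). In $\neg(3)\Rightarrow\neg(2)$ the sentence ``one checks the weights can be chosen'' is the whole proof: you must actually exhibit weights making every $s$-$t$ path fall strictly below $\val{f}/\width{G_f}$, and naive attempts on small instances such as $\ch{2}$ run into contradictory inequalities, so the ``heavy branch'' idea needs to be made precise. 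Your $(3)\Rightarrow(1)$ paragraph, as you acknowledge, only restates the goal; the assertion that ``a minimum path cover of $G_g$ restricts to a valid cover of $G_f$'' is exactly what must be proved, and it is unclear how forbidding the funnel-with-central-path obstruction lets you reroute cover paths when edges vanish.
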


\Cref{lem:wstable-funnel-char} shows that the DAG in \Cref{fig:minimal-minimum} is not width-stable.
Conveniently, the width-stable property seamlessly extends to flow-widths. Indeed, the following property shows the impact that the graph structure has on the possible minimally covering flows that can be defined.

\begin{lemma}
    Let $G$ be an $s$-$t$ DAG.
    \begin{enumerate}
        \item If $G$ is width-stable, then $\fwidth{G}{f} = \width{G|_f}$ for all flows $f\geq 0$ on $G$. \label{lem:fwstable-1}
        \item $G$ is width-stable if and only if $\fwidth{G}{f} \leq \fwidth{G}{h}$ for all flows $0\leq f\leq h$. \label{lem:fwstable-2}
    \end{enumerate} \label{lem:fwstable}
\end{lemma}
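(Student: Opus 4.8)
The plan is to prove the two parts in sequence, deriving Property~\ref{lem:fwstable-2} largely from Property~\ref{lem:fwstable-1}. For Property~\ref{lem:fwstable-1}, the inequality $\width{G|_f} \leq \fwidth{G}{f}$ is already given by \Cref{lem:fw-in-between}, so the entire content is the reverse bound $\fwidth{G}{f} \leq \width{G|_f}$ under width-stability. I would realise the flow-width by an explicit minimal flow: starting from $f$ and repeatedly decrementing $s$-$t$ paths all of whose edges carry weight at least $2$, I obtain a flow $g$ with $g \leq f$ and $g \geq 1$ on $E|_f$ (so that $G|_g = G|_f$), which is minimal and feasible for \Cref{def:flow-width}; hence $\fwidth{G}{f} \leq \val{g}$.

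Then I would invoke the width-stable characterisation. Since $g$ is minimal, every $s$-$t$ path carries weight at most $1$. As $G$ is width-stable, \Cref{lem:wstable-funnel-char}(2) applied to the flow $g$ guarantees an $s$-$t$ path in $G|_g$ carrying weight $\val{g}/\width{G|_g}$; this quantity must therefore be at most $1$, giving $\val{g} \leq \width{G|_g} = \width{G|_f}$. Chaining, $\fwidth{G}{f} \leq \val{g} \leq \width{G|_f}$, which combined with \Cref{lem:fw-in-between} yields the claimed equality. (Here I use that the $G_f$ appearing in \Cref{lem:wstable-funnel-char} coincides with the flow-subgraph $G|_f$.)

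For Property~\ref{lem:fwstable-2}, the forward direction is immediate from Property~\ref{lem:fwstable-1}: if $G$ is width-stable then $\fwidth{G}{f} = \width{G|_f}$ and $\fwidth{G}{h} = \width{G|_h}$, and for $0 \leq f \leq h$ the defining inequality of width-stability gives $\width{G|_f} \leq \width{G|_h}$. For the converse I would argue contrapositively: if $G$ is not width-stable, pick $0 \leq f' \leq g'$ with $\width{G|_{f'}} > \width{G|_{g'}}$ and rescale, setting $f = N f'$ and $h = N g'$ for an integer $N \geq \width{G|_{f'}}$. Scaling preserves supports, so $G|_f = G|_{f'}$ and $G|_h = G|_{g'}$, and a minimum path cover of a subgraph induces a covering flow that never exceeds its own number of paths on any edge; since $N$ dominates both widths and $f', g' \geq 1$ on their supports, each such cover fits under $N f'$ (resp.\ $N g'$). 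Thus $\fwidth{G}{f} = \width{G|_{f'}}$ and $\fwidth{G}{h} = \width{G|_{g'}}$, so $0 \leq f \leq h$ yet $\fwidth{G}{f} > \fwidth{G}{h}$, contradicting the assumed monotonicity.

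The main obstacle is the reverse inequality in Property~\ref{lem:fwstable-1}: one must convert the global width-stability hypothesis into a statement about a single minimal flow. The decrementing-path construction together with \Cref{lem:wstable-funnel-char}(2) is exactly the bridge, the key observation being that minimality forces every path weight down to at most $1$, which pins the value of the minimal flow to the width of its support. The only other point requiring care is the rescaling in the converse of Property~\ref{lem:fwstable-2}, where I must verify that a sufficiently large multiple of $f'$ (resp.\ $g'$) makes the flow-width collapse onto the plain width of the support; this follows precisely because the induced flow of a minimum path cover uses any edge at most as many times as there are paths in the cover.
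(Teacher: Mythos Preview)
Your proof is correct. Part~\ref{lem:fwstable-1} and the forward direction of Part~\ref{lem:fwstable-2} match the paper's argument essentially line for line: both pass to a minimal flow below $f$ with the same support, invoke \Cref{lem:wstable-funnel-char}(2) to force its value down to the width, and then derive monotonicity of $\fwidth{G}{\cdot}$ from monotonicity of $\width{G|_{\,\cdot\,}}$.

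The converse of Part~\ref{lem:fwstable-2} is where you diverge. The paper uses the funnel characterisation of \Cref{lem:wstable-funnel-char}(3): given a funnel $F$ with a central path, it writes down two explicit minimal flows (one routing a unit along the central path, one not) whose flow-widths differ by one, and then scales the larger-support flow by $\width{F}$ to make the pointwise comparison go through. You instead work directly from the definition of width-stability: take any witnessing pair $f'\leq g'$ with $\width{G|_{f'}}>\width{G|_{g'}}$, and scale both by a common factor $N$ large enough that the minimum path cover of each support fits under the scaled flow, forcing $\fwidth{G}{Nf'}=\width{G|_{f'}}$ and likewise for $g'$. Your route is arguably cleaner---it avoids unpacking the funnel structure and shows more transparently that the equivalence is really just ``flow-width collapses to width after sufficient scaling''---while the paper's construction has the advantage of producing a concrete minimal-sized counterexample tied to the forbidden-minor picture used elsewhere in the paper.
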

\begin{proof}
    \begin{enumerate}
        \item Let $G$ be width-stable. Assume $f$ is a minimally covering flow on $G$, that is, $\val{f} = \fwidth{G}{f}$. By \Cref{lem:wstable-funnel-char} there exists an $s$-$t$ path in $G|_f$ carrying $\val{f}/\width{G|_f}$ flow. Because $f$ is minimally covering, the largest flow any $s$-$t$ path can carry is $1$.
        And thus, $\val{f}/\width{G|_f} \leq 1$ and $\fwidth{G}{f} = \val{f} \leq \width{G|_f}$. We also have $\fwidth{G}{f} \geq \width{G|_f}$ by \Cref{lem:fw-first-properties}.\ref{lem:fw-in-between}. It follows that $\fwidth{G}{f} = \width{G|_f}$.
        Let now $f\geq0$ be any (not necessarily minimally covering) non-negative flow on $G$. Let $h$ be a minimally covering flow on $G$ with $h \leq f$ and $G|_h = G|_f$, which exists by \Cref{lem:fw-first-properties}.\ref{lem:mcf-always-exists}. Then, by \Cref{lem:fw-first-properties}.\ref{lem:fw-grows-for-smaller-flows}, $\fwidth{G}{f} \leq \fwidth{G}{h} = \width{G|_f}$. Since also $\fwidth{G}{f} \geq \width{G|_f}$, we have $\fwidth{G}{f} = \width{G}$.
        
        \item If $G$ is width-stable, then the statement follows from Statement \ref{lem:fwstable-1}. Assume $G$ is not width-stable, and let $F$ be a funnel with central path $P$. We define $h$ and $f$ in the following way.  Both flows are minimally covering, with flow $1$ on each of the private edges of the funnel. The flow $h$ routes one unit of flow along the central path, while $f$ does not, i.e., $\fwidth{G}{h} = \val{h} = \width{F}-1$ and $\fwidth{G}{f} = \val{f} = \width{F}$. We have $f \leq h+f$ and $\fwidth{G}{f} > \fwidth{G}{h} \geq \fwidth{G}{h+f}$ by \Cref{lem:fw-first-properties}.\ref{lem:fw-grows-for-smaller-flows}.
    \end{enumerate}
\end{proof}

\subsection{Parity fixing with minimally covering flows}
\label{sec:parity-fixing-minimal-flows}

Cáceres et al.~\cite{caceres2024width} used the width to define an approximation algorithm for \mfdz of ratio $O(\log \Vert f\Vert)$, where $\Vert f\Vert = \max_{e \in E} |f(e)| \leq \val{f}$.
This algorithm follows a ``parity fixing'' approach: it constructs a unitary flow
(that is, a flow with values in $\{-1, 0, 1\}$), 
which, when subtracted from $f$, yields a flow that is even everywhere, and they showed that all unitary flows can be decomposed into at most $\width{G}$ paths
with weights in $\{-1, 1\}$.
The resulting flow can then be divided by two, to repeat the procedure until the flow is zero. 
To sum up, Cáceres et al.~\cite{caceres2024width} proved that it is possible to express $f$ as
\begin{equation}
    \label{eq:parity-fixing}
    f = \sum_{i=0}^{\lfloor\log\Vert f\Vert\rfloor} 2^i f_i, \text{~~~with~~~}\mfdsizeu{G}{f_i} \leq \width{G}.
\end{equation}

We now present the parity-fixing heuristic for \mfdn by Mumey et al.~\cite{mumey2015parity} in order to theoretically analyse its performance.
Given a flow $f \geq 0$, it iteratively finds a non-negative flow $g \leq f$ such that $f - g \equiv_2 0$. Mumey et~al.~\cite{mumey2015parity} showed that this can be done by solving a minimum flow problem, using the following constraints: as lower bounds $0$ on edges where $f$ is even and $1$ where $f$ is odd, and as upper bounds we use $f$. In other words, they showed that the following LP formulation solves the problem for a given flow network $(G, f)$:
\begin{equation}
\begin{aligned}
&\min \val{g}, \text{subject to}\\
&g \text{ is a flow on $G$},\\
&0 \leq g \leq f,\\
&g(e) \geq 1 \text{ for all $e\in E(G)$ where $f(e)$ is odd}.\\
\label{eq:min-flow-parity-fixing}
\end{aligned}
\end{equation}
In the $i$-th iteration, starting with $i=0$, let $f_i$ be an optimal solution to the LP formulation \ref{eq:min-flow-parity-fixing} on $(G, f)$ and subtract $f_i$ from $f$. We recursively decompose $f_i$ using weight $1$ paths, which act as weight-$2^i$ paths in the decomposition of $f$. As a result, $f$ is even, and we divide it by $2$. We then follow up with iteration $i+1$ and repeat this procedure until $\val{f} = 0$. 
See \Cref{alg:mfd-approx} in \Cref{sec:pseudo-code} for a pseudo-code description.


\begin{lemma}
    Given an \mfdn input $(G, f)$, we can decompose the flow into $\lfloor\log\Vert f\Vert\rfloor+1$ flows $f_i$, such that
    \begin{equation}
    f = \sum_{i=0}^{\lfloor\log\Vert f\Vert\rfloor} 2^i f_i \text{~~~and~~~} \mfdsizeo{G}{f_i} = \val{f_i} \leq \fwidth{G}{f^{(i)}},
    \end{equation}
    where $f^{(0)} = f$ and $f^{(i)} = f - \sum_{j = 0}^{i-1} 2^j f_j$. This can be done in time $O(m^{1+o(1)}\log\Vert f\Vert)$ with high probability.
    \label{thm:approx}
\end{lemma}
\begin{proof}
    The algorithm above takes at most $\lfloor \log \Vert f\Vert \rfloor +1$ iterations until the flow is decomposed.
    We have that $\val{f_i} \leq \fwidth{G}{f^{(i)}}$, because any flow satisfying the Covering and Upper bounds constraints in \Cref{def:flow-width} is a feasible solution to LP \ref{eq:min-flow-parity-fixing}. Moreover, every $f_i$ is a minimally covering flow.

    The decomposition into minimally covering flows can be found in $O(m^{1+o(1)}\log\Vert f\Vert)$ time with high probability: LP \ref{eq:min-flow-parity-fixing} can be solved with high probability in time $O(m^{1+o(1)})$ by constructing an equivalent minimum cost flow formulation~\cite{chen2025maximum}. We solve LP \ref{eq:min-flow-parity-fixing} at most $\lfloor\log\Vert f\Vert\rfloor+1$ times.
\end{proof}

We now show that \Cref{thm:approx} improves a previous approximation ratio for \mfdn of $O(\log \val{f})$ by Cáceres et al.~\cite{caceres2024width} on width-stable graphs to $O(\log\Vert f\Vert)$.

\thmapproxalgows*
\begin{proof}
    By \Cref{thm:approx} we can express a flow $f$ as the sum of $\lfloor\log\Vert f\Vert\rfloor+1$ flows $f_i$ with $\val{f_i}\leq\fwidth{G}{f}$. Since $G$ is width-stable, and any flow decomposition is a path cover, we have $\fwidth{G}{f^{(i)}} \leq \fwidth{G}{f} \leq \mfdsizen{G}{f}$. We can thus decompose all $\lfloor\log\Vert f\Vert\rfloor+1$ flows $f_i$ with at most $\mfdsizen{G}{f}$ paths, which gives the approximation ratio.

    After expressing the sum in time $O(m^{1+o(1)}\log\Vert f\Vert)$ by \Cref{thm:approx}, we decompose each flow $f_i$ with weight $1$ paths, which takes time $O(\val{f_i} \cdot n) \leq O(\width{G}\cdot n) \leq O(nm)$. We must decompose $\lfloor \log\Vert f\Vert \rfloor +1$ flows $f_i$ and obtain the time complexity $O(nm\log\Vert f\Vert)$.
\end{proof}

\section{Parallel-width for \mfdn}
\label{sec:approx-alg}

We now present a novel view on flow decompositions, which leads to the main result of the paper. We first show that, despite being used in a different context, there exists a close connection of a certain notion of directed graph minors defined by Deligkas and Meir~\cite{deligkas2017directed} to the decomposition of flows, in the sense that contractions of flow-subgraphs are equivalent to the digraph minor operations. Next, we use this connection to parameterise \mfdn by the parallel-width, which was defined by Deligkas and Meir by the largest minimal cut-set~\cite{deligkas2017directed}. They showed that the class of graphs of $\pw{G}$ at most $c$ is hereditary. We then show that the flow-width generalises the $\width{G}$ and the $\pw{G}$. This gives us an upper bound for \mfdn and consequently an improved parameterised complexity for \mfdn on graphs of constant parallel-width, which, as we show, include width-2 DAGs.

\subsection{Flows and directed graph minors}
\label{sec:subflow-minors}

Flow decompositions $(\mathcal{P}, w)$ of $(G, f)$ of size $k$ naturally construct a sequence of subgraphs \[ G|_f \supseteq G|_{f - w_1P_1} \supseteq \dots \supseteq G|_{f - w_1P_1 - \dots - w_{k-1}P_{k-1}} \supseteq G|_0 = \emptyset, \]
for $P_i \in \mathcal{P}$ and $w = (w_1,\dots,w_k)\in\N^k$.

Moreover, flow networks admit natural edge contractions. As Kloster et al.~\cite[Lemma 4.1]{kloster2018practical} have observed, due to the flow conservation property, contracting an edge $(u,v)$ with $\deg^-(v) = 1$ or $\deg^+(u) = 1$ yields a new flow network whose flow decomposition uniquely recovers the corresponding flow decomposition of the original graph. These contractions are sometimes called ``Y-to-V''\footnote{The name Y-to-V contraction originates from the drawing of the corresponding digraphs. The descender of the letter Y corresponds to the contracted edge.} and are commonly used to simplify inputs~\cite{kloster2018practical, grigorjew2023accelerating}.

\begin{definition}[\cite{deligkas2017directed}, Definition 3]
    A digraph $G'$ is a directed minor (or d-minor) of a digraph $G$ if $G'$ can be obtained from $G$ by a sequence of the following operations:
    \begin{enumerate}
        \item \textbf{Deletion.} Deleting an edge $(a,b)$ where $\deg^+(a) > 1$ and $\deg^-(b) > 1$.
        \item \textbf{Backward contraction.} Contracting an edge $(a,b)$ where $\deg^-(b) = 1$.
        \item \textbf{Forward contraction.} Contracting an edge $(a,b)$ where $\deg^+(a) = 1$.
    \end{enumerate}
    \label{def:d-minors}
\end{definition}


We now show that d-minors as defined in \Cref{def:d-minors} correspond to compact representations of the flow networks that can appear during the process of a flow decomposition. We use the following observation to obtain this correspondence.
A case of proof of \Cref{obs:d-minor-first-delete} is visualized in \Cref{fig:first-delete-then-contract}.
\begin{figure}
    \begin{subfigure}[t]{.5\linewidth}
        \centering
        \includegraphics[width=.7\textwidth]{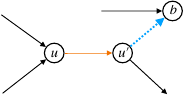}
    \end{subfigure}
    \begin{subfigure}[t]{.45\linewidth}
        \centering
        \includegraphics[width=.5\textwidth]{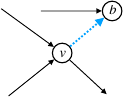}
    \end{subfigure}
    \caption{An $s$-$t$ DAG $G$, where the orange edge $(u,u')$ (left) gets contracted, obtaining the vertex $v$ (right). In both cases, the blue edge $(u', b)$ (left) or $(v,b)$ (right) can be deleted, since $\deg^+(u') > 1$, $\deg^+(v) > 1$ and $\deg^-(b) > 1$. In the context of the proof of \Cref{obs:d-minor-first-delete}, this figure displays the case $a = u'$.}
    \label{fig:first-delete-then-contract}
\end{figure}
\begin{observation}
    Let $G$ be a DAG and let $G'$ be obtained by the contraction of an edge, followed by the deletion of another edge. Then is possible to swap the order of these two operations. That is, every d-minor of a DAG can be obtained by first deleting edges, and then contracting edges.
    \label{obs:d-minor-first-delete}
\end{observation}
\begin{proof}
    Let $G$ be an $s$-$t$ DAG, let $e = (u,u')$ be an edge with $\deg^+(u) = 1$ or $\deg^-(u') = 1$ and let the DAG $G'$ be obtained by contracting $e$. In the resulting DAG $G'$ we identify $u$ and $u'$ and obtain a new vertex $v$. Let now $\tilde{e} = (a,b)$ with $\deg^+(a) > 1$ and $\deg^-(b) > 1$ be an edge to delete.
    There exists a corresponding edge $\tilde{e}'$ of $\tilde{e}$ in $G$. If $a,b \neq v$, we have $\tilde{e}' = \tilde{e}$, and we can delete $\tilde{e}'$ in $G$.
    
    
    The case $b = v$ follows equivalently, as there are two options for the corresponding edge: either $\tilde{e}' = (a, u')$ or $\tilde{e}' = (a, u)$. In both cases, $\tilde{e}'$ can be deleted from $G$: If $\tilde{e}' = (a, u')$, then $\deg^-(u') > 1$, because both $\tilde{e}'$ and $e$ are in-going edges of $u'$. If $\tilde{e}' = (a, u)$, then $\deg^-(u)$ in $G$ is equal to $\deg^-(v)$ in $G'$.
    
    Finally, after deleting $\tilde{e}'$, it remains possible to contract $e$, as deleting edges only reduces the degrees of vertices.
\end{proof}

\begin{lemma}
    A DAG $G'$ is a d-minor of a DAG $G$ if and only if $G'$ is a Y-to-V contracted graph of a flow-subgraph of $G$. \label{lem:dminor-iff-fminor}
\end{lemma}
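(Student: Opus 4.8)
The plan is to first trade flows for their supports. Using the Flow Decomposition Theorem, a subgraph $H\subseteq G$ (specified by an edge set together with its non-isolated endpoints) equals $G|_f$ for some flow $f\geq 0$ if and only if every edge of $H$ lies on an $s$-$t$ path inside $H$; I will call such an $H$ \emph{clean}. This reformulates the statement as: the d-minors of $G$ are exactly the Y-to-V contractions of clean subgraphs of $G$. Throughout I identify graphs up to isomorphism, so that a vertex that a contraction merges in but which ends up with no incident edges is treated the same as a vertex the support simply drops.

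For the direction ``(Y-to-V contraction of a flow-subgraph) $\Rightarrow$ (d-minor)'', since Backward and Forward contractions are precisely the Y-to-V contractions, it suffices to show that $G|_f$ is itself a d-minor of $G$. I would prove this by induction on the number of zero-flow edges, peeling one off per step. If some zero-flow edge $e=(a,b)$ satisfies $\deg^+(a)>1$ and $\deg^-(b)>1$, delete it; as $f(e)=0$ the flow and its support are unchanged, so induction applies to $G-e$. Otherwise every zero-flow edge has an endpoint of degree $1$, say $\deg^-(b)=1$; then flow conservation forces the in- and out-flow of $b$ to vanish, so $b$ is edgeless in $G|_f$, and backward-contracting $e$ yields a flow network (by the contraction property noted above) with one fewer zero-flow edge whose flow-subgraph is isomorphic to $G|_f$. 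Appending the explicit Y-to-V contractions, which are d-minor operations, finishes this direction.

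For the converse I would induct on the length of the d-minor sequence and inspect the last operation producing $G'$ from a graph $G''$ which, by induction, equals a Y-to-V contraction of a clean subgraph $H$. If the last operation is a contraction, I compose it with the contractions producing $G''$, using that a composition of Y-to-V contractions is again one. The whole weight of the proof lies in the remaining case, where $G'$ is obtained by deleting an edge $\bar e=(\bar a,\bar b)$ with $\deg^+_{G''}(\bar a)>1$ and $\deg^-_{G''}(\bar b)>1$. Here $\bar e$ lifts to a kept (non-contracted) edge $e$ of $H$, and I would set $H'$ to be the clean-up of $H\setminus\{e\}$ (remove $e$ and then iteratively every edge that is no longer on an $s$-$t$ path) and re-contract the survivors.

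The main obstacle is to prove that this lift reproduces $G'$ exactly, and the key tool I would isolate is a \emph{blob-reachability lemma}: in any Y-to-V contraction, within each contracted blob every vertex receiving an in-edge of the blob can reach, through internal edges, every vertex emitting an out-edge of the blob. This is proved by induction on the contractions, checking that both a Backward and a Forward merge of two blobs preserve the property. Granting it, two things fall out. First, the degree hypotheses let me reroute $s$-$t$ paths around $\bar e$, so $G''-\bar e$ is still clean. Second, any vertex $w$ that dies when $e$ is removed from $H$ cannot carry a kept in-edge: such an edge enters its blob at a vertex that, by blob-reachability, reaches every exit vertex of that blob internally, so if any out-edge of the blob survived then $w$ would still reach $t$, while if none survived the whole blob would be dead, contradicting that $G''-\bar e$ is clean. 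Hence the clean-up deletes only $e$ together with internal blob edges, the surviving part of each blob stays internally connected and contracts to the corresponding vertex of $G''-\bar e$, giving contract$(H')=G''-\bar e=G'$. The one further point demanding care is re-checking that the surviving internal edges can still be taken in a valid Y-to-V order on $H'$, which I would handle by contracting them in the same relative order as in $G''$.
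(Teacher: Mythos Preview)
Your forward direction is essentially the paper's argument, with more detail on why one of the three d-minor operations is always applicable to a zero-flow edge.

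For the converse, the paper takes a much shorter global route than your induction. It lets $H\subseteq G$ be the subgraph on all never-deleted edges and observes that performing only the contraction steps of the d-minor sequence, in their original order, on $H$ is still a valid Y-to-V sequence: at each stage the current graph is a subgraph of the corresponding graph in the full d-minor sequence (the edges missing from it are precisely those that the full sequence will delete \emph{later}), and passing to a subgraph can only help the degree-$\le 1$ contraction conditions. The end result of these contractions is exactly $G'$. This single monotonicity observation replaces your entire inductive blob-reachability machinery; you never need to peel off one deleted edge $e$, clean up $H\setminus\{e\}$, and then re-verify that the surviving internal edges still admit a Y-to-V order.

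That said, your attention to cleanup is not wasted. The paper justifies that $H$ is a flow-subgraph by saying ``the deletion operation enforces the vertices to remain connected to $s$ and $t$'', but this reasoning literally applies to the contracted blobs in the evolving minor, not necessarily to each individual vertex of the uncontracted $H$. Strictly speaking one should pass to the cleanup of $H$ and then check that $G'$ is still (up to isomorphism) a Y-to-V contraction of it. Your blob-reachability lemma is exactly the right tool for that step, but it can be applied once to the final $H$ rather than threaded through an induction on the length of the d-minor sequence.
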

\begin{proof}
    First assume that $G'$ is a Y-to-V contracted DAG of a flow-subgraph $H$ of $G$, such that $H = G|_f$ for some flow $f \geq 0$. To construct $H$ from $G$ using d-minor operations, we can alternate between the deletion operation and the contraction operations to delete/contract all edges with $f(e) = 0$, as we can always use at least one operation. To construct $G'$ from $H$, it is left to do Y-to-V contractions.

    Next, assume that $G'$ is a d-minor of $G$. Using \Cref{obs:d-minor-first-delete}, we can obtain $G'$ from $G$ by first deleting edges and then contracting them. Thus, edges in $G'$ are Y-to-V contractions of edges in $G$. Undoing these contractions, we obtain an $s$-$t$ DAG $H \subseteq G$. A path cover of the DAG $H$ induces a flow $f:E(G) \to \N$ that is positive on $E(H)$ and zero on $E(G - H)$, where $G - H$ is defined by $V(G-H) \coloneqq V(G)\setminus V(H)$ and $E(G-H) \coloneqq E(G)\setminus \{(u,v)\mid u\in V(H) \text{ or } v\in V(H)\}$.
\end{proof}

As a first implication, we show that the class of width-stable DAGs can be described using a forbidden minor.

\begin{definition}
    \label{def:chk}
    We define the graph $\ch{k}$ to consist of $4$ vertices $s, u, v, t$ and of the following edges: $k$ parallel edges $(s,u)$, $k$ parallel edges $(v,t)$, and the three edges $(s, v), (u, v), (u, t)$.
\end{definition}

\begin{lemma}
    Let $G$ be an $s$-$t$ DAG. $G$ is width-stable if and only if $G$ is $\ch{2}$-d-minor free.
    \label{lem:ws-forbidden-minor}
\end{lemma}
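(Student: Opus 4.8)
The plan is to combine the funnel characterisation of width-stability (\Cref{lem:wstable-funnel-char}) with the equivalence between d-minors and Y-to-V contractions of flow-subgraphs proved just above. Concretely, it suffices to show that $G$ contains a funnel subgraph with a central path if and only if $\ch{2}$ is a d-minor of $G$; the claim then follows, since by \Cref{lem:wstable-funnel-char} the existence of a funnel subgraph with central path is exactly the negation of width-stability. I will prove the two directions of this reformulated equivalence separately, in both cases routing paths so that the relevant flow-subgraph $H = G|_g$ contracts onto (a copy of) $\ch{2}$.

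For the direction ``$\ch{2}$ is a d-minor $\Rightarrow$ $G$ is not width-stable'' I start from a flow-subgraph $H\subseteq G$ that Y-to-V contracts onto $\ch{2}$, so that the vertices $s,u,v,t$ of $\ch{2}$ (see \Cref{def:chk}) inflate to connected blobs $B_s,B_u,B_v,B_t$ in $H$, each contractible to a single vertex, and the edges of $\ch{2}$ become edges between blobs. Since $B_u$ is connected and contractible to a point, the two edges entering $B_u$ from $B_s$ (the parallel edges $(s,u)$) must merge inside $B_u$ at a vertex $\hat u$ with $\deg^-(\hat u)\ge 2$; dually the two edges leaving $B_v$ towards $B_t$ branch at a vertex $\hat v$ with $\deg^+(\hat v)\ge 2$. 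I then assemble a funnel $F$ from two internally disjoint $s$–$\hat u$ paths (realising the two $(s,u)$ edges), the path realising $(u,t)$ leaving $\hat u$, the path realising $(s,v)$ entering $\hat v$, and two internally disjoint $\hat v$–$t$ paths; in $F$ the vertex $\hat u$ lies in the contracting phase and $\hat v$ in the expanding phase, so $F$ is a funnel with $\deg_F^-(\hat u)\ge 2$ and $\deg_F^+(\hat v)\ge 2$. Finally the edge $(u,v)$ of $\ch{2}$ lifts to a $\hat u$–$\hat v$ path through $B_u$ and $B_v$, which serves as a central path, so $G$ is not width-stable.

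For the converse ``$G$ not width-stable $\Rightarrow$ $\ch{2}$ is a d-minor'' I take a funnel subgraph $F$ with central path $P\colon u\rightsquigarrow v$ in $G$, where $\deg_F^-(u)\ge 2$ and $\deg_F^+(v)\ge 2$. Exploiting that the contracting phase of a funnel is an in-forest and the expanding phase an out-forest, I route two internally disjoint paths into $u$ and two internally disjoint paths out of $v$ inside $F$, together with the unique forward $F$-path $u\rightsquigarrow t_F$ and backward $F$-path $s_F\rightsquigarrow v$, and extend all of them to the global source $s$ and sink $t$. Together with $P$ these six routes induce a flow whose support $H$ I claim has $\ch{2}$ as a d-minor: contracting each route to a single edge yields the edge set of $\ch{2}$, and I use acyclicity together with the central path to force the routes to be disjoint except at $s,u,v,t$ — for instance a vertex shared by an $s$–$u$ route and a $v$–$t$ route would, through $P$, close a cycle $v\rightsquigarrow\cdots\rightsquigarrow u\rightsquigarrow v$.

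The main obstacle I anticipate is precisely this last disjointness bookkeeping. Most pairs of routes are forced apart by DAG-acyclicity combined with the central path, and overlaps that meet only at a shared endpoint are harmless because Y-to-V contraction collapses the common chain. The delicate cases are overlaps among the two ``side'' routes $u\rightsquigarrow t$ and $s\rightsquigarrow v$ and the central path $P$, which are not immediately excluded: I expect to rule them out by noting that inside the funnel such an overlap would place a vertex simultaneously in the contracting and expanding phase, contradicting $u\rightsquigarrow v$, and to absorb any residual overlaps outside $F$ by further d-minor operations. Here I rely on the fact that d-minors compose, so it suffices to exhibit $\ch{2}$ as a d-minor of $H$ rather than as an exact Y-to-V contraction, and on checking that the two side routes can be chosen to avoid $u$ and $v$ respectively (the only points at which realisability of the edges $(s,v)$ and $(u,t)$ could fail).
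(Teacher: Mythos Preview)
Your proposal is correct and follows essentially the same route as the paper: both directions go through \Cref{lem:wstable-funnel-char} and the d-minor/flow-subgraph correspondence, producing a funnel with central path from the inflated $\ch{2}$ in one direction and routing the seven paths (two into $u$, two out of $v$, the two side paths, and the central path) to induce a flow whose support contracts to $\ch{2}$ in the other. The paper's write-up is terser and simply asserts that the resulting subgraph Y-to-V contracts to $\ch{2}$ without the disjointness bookkeeping you anticipate; your more careful treatment of overlaps (via acyclicity plus the central path forcing cycles) is a legitimate filling-in of details the paper leaves implicit.
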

\begin{figure}
    \begin{subfigure}[t]{.45\linewidth}
        \centering
        \includegraphics[width=.5\textwidth]{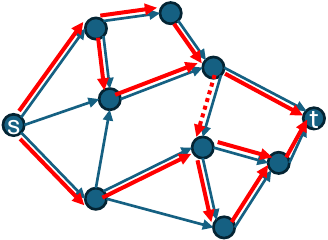}
    \end{subfigure}
    \begin{subfigure}[t]{.45\linewidth}
        \centering
        \includegraphics[width=.5\textwidth]{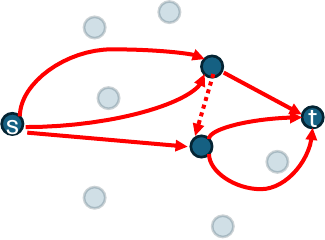}
    \end{subfigure}
    \caption{An $s$-$t$ DAG that is not width-stable. The flow-subgraph in red can be Y-to-V contracted to $\ch{2}$.}
    \label{fig:widthstable}
\end{figure}
\begin{proof}
    If $G$ contains $\ch{2}$ as d-minor, then it contains a flow-subgraph $H$ such that $\ch{2}$ is a Y-to-V contracted graph of $H$. $H$ is then exactly a funnel (of maximum minimal cut-set size $4$) with a central path, and by \Cref{lem:wstable-funnel-char} $G$ is not width-stable.

    If $G$ is not width-stable, it contains by \Cref{lem:wstable-funnel-char} a funnel $F$ with central path $P$ from $u$ to $v$, where $\deg^-(u)\geq 2$ and $\deg^+(v)\geq 2$ with respect to $F$. This means that there are at least two distinct paths from $s$ to $u$ and two distinct paths from $v$ to $t$. Since $F$ is a funnel, there also exists a path from $u$ to $t$ avoiding $v$, and a path from $s$ to $v$ avoiding $u$.
    This subgraph of paths has a minimum path cover of size $3$, and it induces a flow $f$ on $G$, such that $\ch{2}$ is a Y-to-V contracted graph of $G|_f$.
\end{proof}

An example of \Cref{lem:ws-forbidden-minor} is illustrated in \Cref{fig:widthstable}.
The lemma gives a natural proof showing that width-stable DAGs generalise series-parallel DAGs, as series-parallel DAGs are precisely $\ch{1}$-d-minor free DAGs~\cite{holzman2003network}.

Detecting minors of constant size can be done in polynomial time, which was shown using several previous results in graph minor theory~\cite{deligkas2017directed}. However, there is a simple polynomial algorithm that detects if $\ch{2}$ is present in a DAG $G$ as d-minor, which works by computing reachability questions on $G$.
\begin{corollary}
    \label{cor:detect-ch2}
    There exists an algorithm that detects whether an $s$-$t$ DAG $G$ is width-stable in time $O(n^3)$.
\end{corollary}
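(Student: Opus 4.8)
The plan is to combine \Cref{lem:ws-forbidden-minor} with a direct, reachability-based test for the presence of $\ch{2}$ as a d-minor. Since \Cref{lem:ws-forbidden-minor} states that $G$ is width-stable if and only if $G$ is $\ch{2}$-d-minor free, it suffices to decide in polynomial time whether $\ch{2}$ occurs as a d-minor of $G$. Inspecting the structure of $\ch{2}$ (and the proof of \Cref{lem:ws-forbidden-minor}), such a d-minor is witnessed by a choice of two \emph{branch} vertices: a merge vertex $u$ (playing the role of the common endpoint of the two parallel $(s,u)$ edges and the tail of the central path) and a split vertex $v$ (the head of the central path and the tail of the two parallel $(v,t)$ edges). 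The algorithm therefore iterates over all $O(n^2)$ ordered pairs of distinct vertices $(u,v)$ and tests whether this pair can serve as the branch vertices of a $\ch{2}$ d-minor.

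For a fixed pair $(u,v)$, the test amounts to verifying the following reachability conditions, each corresponding to one connector of $\ch{2}$: (i) there are two internally vertex-disjoint paths from $s$ to $u$ (the two $(s,u)$ edges, forcing $u$ to be a merge vertex); (ii) there are two internally vertex-disjoint paths from $v$ to $t$ (the two $(v,t)$ edges, forcing $v$ to be a split vertex); (iii) $u$ reaches $v$ (the central path); (iv) $s$ reaches $v$ in $G-u$ (the edge $(s,v)$); and (v) $u$ reaches $t$ in $G-v$ (the edge $(u,t)$). Conditions (iii)--(v) are plain reachability queries in $G$ or in a single-vertex-deleted subgraph, each solvable in $O(m)$ time by a graph search, while (i) and (ii) are two-disjoint-path questions answerable in $O(m)$ time by a constant number of augmenting-path computations (equivalently, by Menger's theorem on the vertex-split graph). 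Running this test over all pairs yields total time polynomial in $n$ and $m$, and by \Cref{lem:ws-forbidden-minor} the graph is width-stable exactly when no pair passes the test.

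Correctness splits into two directions. The ``only if'' direction is essentially the content of the proof of \Cref{lem:ws-forbidden-minor} together with \Cref{lem:wstable-funnel-char}: if $G$ is not width-stable it contains a funnel $F$ with a central path from some $u$ to some $v$, and the funnel structure directly supplies two disjoint $s$-$u$ paths, two disjoint $v$-$t$ paths, the central $u$-$v$ path, and the crossing paths $s\rightsquigarrow v$ and $u\rightsquigarrow t$; hence conditions (i)--(v) hold for this pair. The ``if'' direction requires extracting from conditions (i)--(v) an actual flow-subgraph $G|_f$ that Y-to-V contracts to $\ch{2}$, i.e.\ combining the separately found paths into one subgraph covered by three $s$-$t$ paths whose contraction yields exactly the $\ch{2}$ pattern.

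The main obstacle is precisely this combination step: the five reachability witnesses are found independently and may share vertices, whereas the minor model of $\ch{2}$ needs its connectors to be internally disjoint and to avoid the opposite branch vertex. I would resolve this by exploiting the DAG's topological order and the deletion operation of \Cref{def:d-minors}: process the witness paths in topological order, reroute or truncate them at their first point of unwanted intersection, and delete the now-redundant edges (whose endpoints retain in-/out-degree at least $2$) so that what remains is a funnel subgraph with central path $u\rightsquigarrow v$. Once such a subgraph is isolated, \Cref{lem:ws-forbidden-minor} guarantees it contracts to $\ch{2}$, closing the equivalence and establishing that the reachability test decides width-stability in polynomial time.
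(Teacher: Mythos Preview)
Your overall strategy matches the paper's: iterate over candidate pairs $(u,v)$ and test reachability-style conditions for each connector of $\ch{2}$. Your conditions (iii)--(v) coincide with the paper's. The difference lies in (i) and (ii): the paper only asks that the number of distinct $s$--$u$ paths be at least $2$ (and symmetrically for $v$--$t$), whereas you require two \emph{internally vertex-disjoint} such paths. This strengthening breaks completeness.

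Concretely, take $V=\{s,a,b,c,d,u,v,e,f,g,t\}$ with edges $s\to a$; $a\to b,c,d$; $b,c\to u$; $d\to v$; $u\to v$; $u\to g$; $v\to e,f$; and $e,f,g\to t$. This DAG is not width-stable: a flow that is zero on $(u,v)$ and positive elsewhere yields a flow-subgraph of width $4$, while $\width{G}=3$. Correspondingly, $\ch{2}$ is a d-minor (contract $(s,a)$ first, then $(b,u),(c,u),(d,v),(e,t),(f,t),(g,t)$). Yet every $s$--$x$ path with $x\neq a$ passes through the cut-vertex $a$, so no candidate vertex admits two internally vertex-disjoint paths from $s$; your test (i) therefore fails for every pair and your algorithm would wrongly declare $G$ width-stable. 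The paper's weaker test $d_s(u)\ge 2$ holds at $u$ (there are two distinct $s$--$u$ paths, via $b$ and via $c$), and checks (iii)--(v) go through at the pair $(u,v)$. The underlying reason is that the two parallel $(s,u)$ edges of the minor can arise after a shared prefix of the two $s$--$u$ paths has been Y-to-V contracted into $s$; internal vertex-disjointness in $G$ is not required. Replacing your (i)/(ii) by the path-count conditions $d_s(u)\ge 2$ and $d_t(v)\ge 2$ recovers the paper's algorithm.
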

\begin{proof}
    First, we compute the number of $s$-$v$ paths $d_s(v)$ and the number of $v$-$t$ paths $d_t(v)$ for every $v\in V(G)$, which we can do in time $O(n^2)$ (since the size of these integers is at most $2^n$). Next, we iterate over all pairs of two internal vertices  $u < v$ for a topological order $<$ of $G$, for which $d_s(u)\geq 2$ and $d_t(v)\geq 2$. This ensures that there is a d-minor with two parallel edges from $s$ to $u$ and two parallel edges from $v$ to $t$.
    With a graph search from $u$, we can check whether there exists a path to $v$. Finally, we must check if there exists a path from $s$ to $v$ that avoids $u$. We can do so by removing $u$ and all its edges from $G$ and by doing a graph search from $s$ in that subgraph. Similarly, we can check if there exists a path from $u$ to $t$ that avoids $v$.
    If all these paths exist, we have shown that there is a $\ch{2}$ d-minor with internal vertices $u$ and $v$. We iterate through $O(n^2)$ pairs and perform a constant number of linear graph searches, obtaining a total runtime of $O(n^3)$.
\end{proof}

\subsection{\mfdn parameterised upper bounds}
\label{sec:mfd-approx-parallel-width}

Since the width of a flow-subgraph of a DAG $G$ can be larger than the width of $G$, we use a structural parameter to describe the class of DAGs whose flow-subgraphs' widths stay below a given threshold. It is the third parameter in this paper, and is connected to the previous two, the flow-width of $(G,f)$ and the width of $G$.
\begin{definition}[\cite{deligkas2017directed}, Definition 12]
    The \textit{parallel-width} $\pw{G}$ of an $s$-$t$ DAG $G$ is the size of the  largest minimal cut-set of $G$. That is, the size of a set $C \subseteq E$, such that every $s$-$t$ path crosses $C$, and in addition, for every $C' \subsetneq C$, there exists some $s$-$t$ path that avoids $C'$.
    \label{def:parallel-width}
\end{definition}
Let $\gpc{c}$ be the DAG consisting of two nodes $s$ and $t$ and $c$ parallel edges $(s,t)$. Deligkas and Meir~\cite{deligkas2017directed} showed that it is $\mathsf{NP}$-hard to compute the parallel-width and that the class of DAGs $G$ with $\pw{G} < c$ for a constant $c\in\N$ are characterized by the forbidden d-minor $\gpc{c}$. The following lemma shows why it is relevant for the parity fixing algorithm.

\lempwidthwidthfwidth*
\begin{proof}
We need to show four inequalities:
\begin{enumerate}
    \item $\width{G} \leq \min\{\fwidth{G}{f} \mid f > 0\}$: $\fwidth{G}{f}$ is the value of some positive flow, whereas the $\width{G}$ is the minimum value of any positive flow.
    \item $\width{G} \geq \min\{\fwidth{G}{f} \mid f > 0\}$: Consider $f$ to be the induced flow of a minimum path edge cover of $G$. Then $\fwidth{G}{f} = \val{f} = \width{G}$.
    \item $\pw{G} \leq \max\{\fwidth{G}{f} \mid f \geq 0 \}$: Let $C = \{ (u_1, v_1), \dots, (u_\ell, v_\ell) \}$ be the largest minimal cut-set. It was shown in~\cite{deligkas2017directed} that there exists an out-tree from $s$, with leaves $u_i$ for $i\in[\ell]$ and an in-tree from the leaves $v_i$ for $i\in[\ell]$ to the root $t$. This is exactly a funnel subgraph $F$, and the width of it is $\pw{G}$, since $C$ is the set of the private edges of $F$. To verify the inequality, we can choose for $f$ the induced flow of the unique path cover of $F$.
    \item $\pw{G} \geq \max\{\fwidth{G}{f} \mid f \geq 0\}$: The right hand side is the maximum value of all non-negative minimally covering flows $f$. By definition of minimally covering flows, there exists a minimal cut-set $C$ in $G|_f$ such that every edge in $C$ is covered at most once. We have $\pw{G} \geq |C| \geq \val{f} = \fwidth{G}{f}$.
\end{enumerate}
\end{proof}

\begin{corollary}
    If an $s$-$t$ DAG $G$ is width-stable, then $\width{G} = \pw{G}$.
    \label{cor:width-eq-pw-if-ws}
\end{corollary}
\begin{proof}
   Clearly, for all DAGs, we have $\width{G} \leq \pw{G}$. Let $f$ and $h$ be flows such that $\fwidth{G}{f} = \pw{G}$ and $\fwidth{G}{h} = \width{G}$ according to \Cref{lem:pwidth-width-fwidth}. By \Cref{lem:fw-first-properties}, since $h > 0$, we have $\fwidth{G}{f+h} \leq \fwidth{G}{h}$, and thus $\fwidth{G}{f+h} = \width{G}$. By \Cref{lem:fwstable}, if $G$ is width-stable, we have $\pw{G} = \fwidth{G}{f} \leq \fwidth{G}{f+h} = \width{G}$.
\end{proof}

The following Lemma is essential for showing the main result of the paper.

\thmapproxalgogeneral*
\begin{proof}
    By \Cref{lem:pwidth-width-fwidth}, we have $\fwidth{G}{f}\leq\pw{G}$ for any flow $f\geq 0$. Hence, the parity-fixing algorithm returns at most $\pw{G}\cdot(\lfloor\log\Vert f\Vert\rfloor + 1)$ many paths. 

    For the runtime, as before, we express $f$ in time $O(m^{1+o(1)}\log\Vert f\Vert)$ as the sum of $\lfloor\log\Vert f\Vert\rfloor + 1$ flows $f_i$. We have $\val{f_i} \leq \pw{G} \leq m$ for all $f_i$, and thus take $O(nm)$ time to decompose one $f_i$.
\end{proof}

Since for $f> 0$ we have $\width{G} = \width{G|_f} \leq \mfdsizen{G}{f}$, the approximation ratio is $\frac{\pw{G}}{\mfdsizen{G}{f}}\cdot(\lfloor\log\Vert f\Vert\rfloor + 1) \leq \frac{\pw{G}}{\width{G}}\cdot(\lfloor\log\Vert f\Vert\rfloor + 1)$. The fraction $\frac{\pw{G}}{\width{G}}$ generalises width-stability, and describes the factor by which the width can possibly grow during the decomposition of a flow. 
If $\mfdsizen{G}{f} \geq \pw{G}$, we moreover obtain a ratio of $\lfloor\log\Vert f\Vert\rfloor + 1$.
In general, the values $\fwidth{G}{f^{(i)}}$ can grow above $\mfdsizen{G}{f}$.
In \Cref{sec:large-apx-ratio}, we show that the approximation ratio of the parity-fixing heuristic can be as large as $\Omega(\Vert f\Vert)$ in some classes of graphs, which shows a contrast between \mfdn and \mfdz.
The improved upper bound of the algorithm analysis implies the following theorem.

\corconstantparwunary*
\begin{proof}
    \Cref{thm:approx-algo-general} yields an upper bound for $\mfdsizen{G}{f}$ of size $\pw{G}\cdot (\lfloor\log\Vert f\Vert\rfloor+ 1)$. It has previously been shown that \mfdn is in FPT~\cite{kloster2018practical} with parameter $k = \mfdsizen{G}{f}$, with an implemented tool \textsc{toboggan} that runs in time $4^{k^2} k^{1.5k} k^{o(k)} 1.765^k \cdot (n + \log\Vert f\Vert) = 2^{O(k^2)} \cdot (n + \log\Vert f\Vert)$~(\cite{kloster2018practical}, Theorem 7). Substituting $k$ with the upper bound, we obtain a runtime dependent on the $\pw{G}$ and the logarithm of the largest flow weight in the exponent. If we assume that $\pw{G}$ is at most a constant, this running time is $\Vert f\Vert^{O(\log\Vert f\Vert)} \cdot (n + \log\Vert f\Vert) = \Vert f\Vert^{O(\log\Vert f\Vert)} \cdot n$.
\end{proof}

\Cref{cor:constant-parw-unary} implies that \Cref{thm:constant-parw-binary} is likely tight, as a reduction showing \textit{strong} \NP-hardness on width-2 graphs does not exist, unless $\mathsf{NP} \subseteq \mathsf{QP}$, as we now show:
\begin{lemma}
    Let $G$ be an $s$-$t$ DAG with $\width{G} \leq 2$. Then $G$ is width-stable.
    \label{lem:width-2-width-stable}
\end{lemma}
\begin{proof}
    We show with induction on the edge set, that every $s$-$t$ DAG $G$ with $\width{G} \leq 2$ is width-stable using \Cref{def:d-minors}. $G$ is obviously width-stable if $E(G) = \{(s,t)\}$.
    
    Let $G$ be any $s$-$t$ DAG with $\width{G} \leq 2$ and let $(u,v) \in E(G)$.
    If $\deg^+(u) > 1$ and $\deg^-(v) > 1$, we can remove the edge $(u,v)$. But since $\width{G} \leq 2$, there must be a $u$-$v$ path disjoint from the edge $(u,v)$. When removing $(u,v)$, the width of the new DAG is not strictly larger, because we can route both paths of a minimum path cover through the $u$-$v$ path.
    Moreover, edge contractions do not change the width of the DAG. This is the case, because minimum path covers are flow decompositions of a minimum flow, and Kloster et al.~\cite[Lemma 4.1]{kloster2018practical} showed that $\mfdsizen{G}{f} = \mfdsizen{G'}{f'}$, where $G'$ and $f'$ are obtained from $G$ and $f$ by contracting an edge.
\end{proof}

\begin{corollary}
    \mfdn on $(G,f)$ can be solved in time $\Vert f\Vert^{O(\log\Vert f\Vert)}\cdot n$ when $\width{G} = 2$, which is quasi-polynomial if $f$ is coded in unary.
    \label{cor:width-2-unary-qp}
\end{corollary}
\begin{proof}
    The statement follows from \Cref{cor:width-eq-pw-if-ws}, \Cref{cor:constant-parw-unary} and \Cref{lem:width-2-width-stable}. 
\end{proof}

Finally, we partially address an open question stated by Kloster et al.~\cite{kloster2018practical}, which directly follows from the reduction in \Cref{thm:constant-parw-binary} and \Cref{cor:width-2-unary-qp}:
\begin{corollary}
    If \mfdn can be solved in time $O^*(2^{o(k^2)})$, where $k = \mfdsizen{G}{f}$, we can solve \textsc{generating set} in time $O^*(2^{o(\log(s)^2)})$, where $s$ is the maximum integer in the input.
\end{corollary}

This result shows that improving the time $O^*(2^{O(k^2)})$ for \mfdn is inherently an additive combinatorics problem, rather than a graph theoretic problem.

\section{Conclusions}
In this article we answered problems regarding the computational complexity of \mfdn parameterised by DAG width notions. We showed that \mfdn on graphs $G$ with $\width{G} = 2$ or $\pw{G} \leq c$ obtains a quasi-polynomial runtime when the flow is coded in unary and is NP-hard for binary coded flows. This shows that \mfdn with constant parallel-width naturally generalises the \textsc{generating set} problem. When the width is equal to 3, the problem remains even strongly NP-hard.

We achieved this result using two techniques. First, by introducing the flow-width of a flow network, we relate minimally covering flows to graph structure. Second, by connecting flow decompositions to directed minors. We showed an intermediate result that expressing the input flow as the sum of minimally covering flows can lead to efficient upper bounds for minimum flow decompositions.

An interesting open question is, whether \textsc{generating set} can be solved in time $O^*(2^{o(\log(s)^2)})$, where $s$ is the maximum integer in the input, or can the existence of such an algorithm be ruled out under classical computational complexity assumptions? This paper shows that a negative answer would also show that there is no $O^*(2^{o(k^2)})$ time algorithm for \mfdn.

\bibliography{main}

\appendix

\section{Pseudo-code}
\label{sec:pseudo-code}
\begin{algorithm}
    \caption{Parity-fixing heuristic for \mfdn~\cite{mumey2015parity}}
    \label{alg:mfd-approx}
    \begin{algorithmic}[1]
        \REQUIRE{\mfd instance $(G,f)$}
        \ENSURE{Flow decomposition $\mathcal{P}$}
        \STATE $i \gets 0$
        \STATE $\mathcal{P} \gets \{\}$
        \WHILE{$f > 0$}
            \STATE $h \gets \text{Optimal solution of LP \ref{eq:min-flow-parity-fixing}}$ \label{line:min-covering-flow-of-odd-edges}
            \STATE $\{P_{i,1},\dots,P_{i,\val{h}}\} \gets \text{FD}(G, h)$ \text{Flow decomposition into weight $1$ paths} \label{line:fd-parity-fixing}
            \STATE $\{w_{i,1},\dots,w_{i,\val{h}}\} \gets \{2^i,\dots,2^i\}$
            \STATE $\mathcal{P} \gets \mathcal{P} \cup \{ (P_{i,j}, w_{i,j})\}$ for $j = 1,\dots,\val{h}$
            \STATE $f \gets (f - h) / 2$ \label{line:update-f}
            \STATE $i \gets i+1$
        \ENDWHILE
    \end{algorithmic}
\end{algorithm}


%
%
%

\section{Larger approximation ratio}
\label{sec:large-apx-ratio}

\begin{figure}[ht]
    \centering
    \includegraphics[width=.75\textwidth]{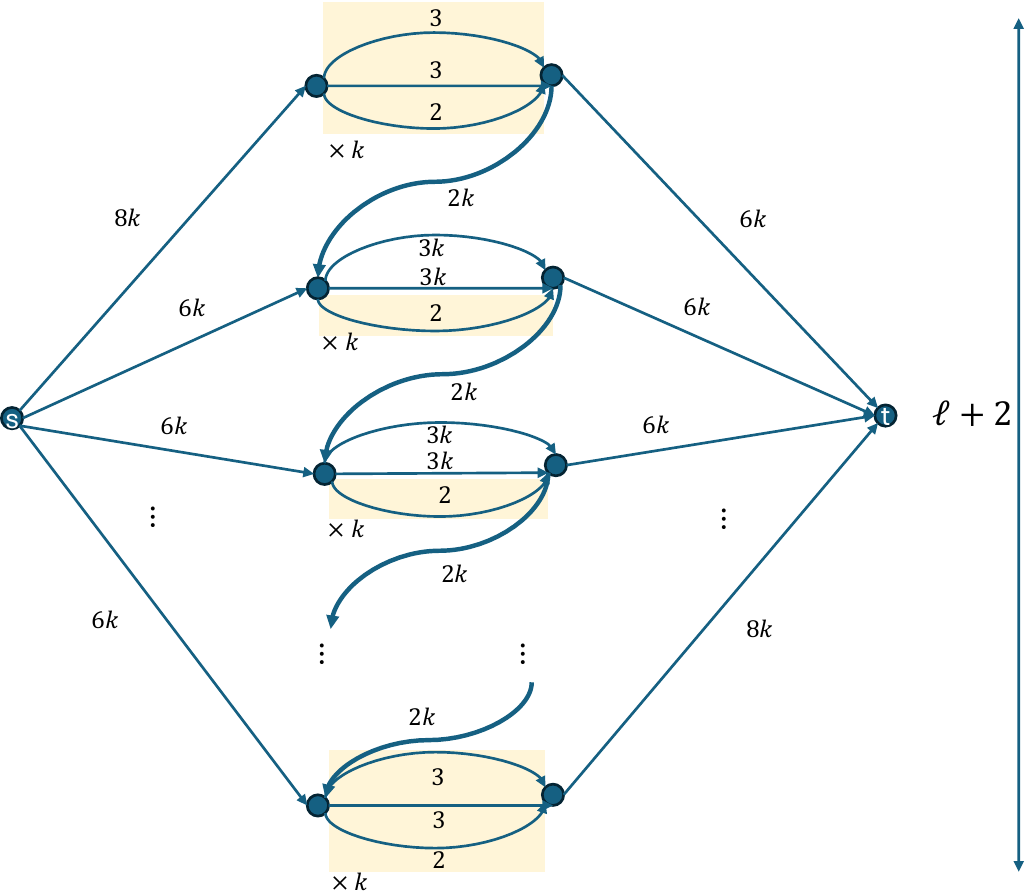}
    \caption{\mfd instance $(G_{(k, \ell)}, f_k)$. The yellow highlighted edges are copied $k$ times, and there are $\ell+2$ total central gadgets (that is, $\deg^+(s) = \ell+2$). Thus, $\pw{G_{(k, \ell)}} = 6k + \ell(k+2)$, and we call these edges the maximum cut-set edges. There are two central gadgets consisting of $4k$ maximum cut-set edges and $\ell$ central gadgets each consisting of $k+2$ maximum cut-set edges.}
    \label{fig:bad-example-approx}
\end{figure}
Consider the \mfdn instance $(G_{(k, \ell)}, f_k)$ defined in \Cref{fig:bad-example-approx}. The following lemma shows the idea of the construction.

\begin{lemma}
    \label{lem:large-pw-mfd-ratio}
    For all $c > 1$, there exist $k, \ell > 0$, such that $$\frac{\pw{G_{(k, \ell)}}} {\mfdsizen{G_{(k, \ell)}}{f_k}} > c.$$
\end{lemma}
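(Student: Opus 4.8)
The goal is to exhibit, for any target ratio $c > 1$, a choice of parameters $k, \ell$ for the family $(G_{(k,\ell)}, f_k)$ of \Cref{fig:bad-example-approx} that makes the gap between $\pw{G_{(k,\ell)}}$ and the true optimum $\mfdsizen{G_{(k,\ell)}}{f_k}$ exceed $c$. The figure caption already hands us the parallel-width exactly, namely $\pw{G_{(k,\ell)}} = 6k + \ell(k+2)$, since this is the size of the maximum minimal cut-set (the maximum cut-set edges). So the entire content of the lemma reduces to producing a \emph{good upper bound} on $\mfdsizen{G_{(k,\ell)}}{f_k}$ and then comparing the two quantities asymptotically.

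The plan is to first argue that $\mfdsizen{G_{(k,\ell)}}{f_k}$ grows much more slowly in $k$ than $\pw{G_{(k,\ell)}}$ does. Concretely, I would exhibit an explicit flow decomposition of $(G_{(k,\ell)}, f_k)$ whose size depends only mildly on $k$ --- ideally one where a single heavy path (or a constant number of them) carries weight on the $k$-fold copied (yellow) edges, so that the $k$ parallel copies are covered by few paths of large weight rather than by $\Theta(k)$ unit-weight paths. The key structural feature to exploit is that the yellow edges are \emph{parallel copies carrying equal flow}, so one path of weight $\Vert f_k\Vert$ can simultaneously account for a whole bundle, whereas the parallel-width counts each copy separately in the cut-set. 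This is exactly the phenomenon that $\fwidth{G}{f}$ (and hence $\pw{G}$) can exceed $\mfdsizen{G}{f}$: the cut-set is wide, but the flow on those edges is large, so few heavy paths suffice. I would bound the number of paths needed for the two ``$3k$''-gadgets and the $\ell$ ``$(k+2)$''-gadgets, aiming to show $\mfdsizen{G_{(k,\ell)}}{f_k} = O(\ell + \text{const})$, i.e.\ essentially independent of $k$.

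With such a bound in hand, the ratio becomes
\[
\frac{\pw{G_{(k,\ell)}}}{\mfdsizen{G_{(k,\ell)}}{f_k}}
= \frac{6k + \ell(k+2)}{O(\ell)},
\]
which tends to infinity as $k \to \infty$ for any fixed $\ell$ (the numerator is $\Theta(k\ell)$ while the denominator is $O(\ell)$), so for every $c > 1$ one can pick $\ell$ first and then $k$ large enough to clear the threshold. I would state the precise constants so that the final inequality is transparent, then choose $k,\ell$ accordingly to conclude.

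\textbf{Main obstacle.} The hard part is not the parallel-width side (which the caption gives us) but certifying the \emph{lower} bound direction is not needed --- rather, certifying that the claimed small decomposition is genuinely optimal, or at least small enough. I expect the real work to be in proving that $\mfdsizen{G_{(k,\ell)}}{f_k}$ is $O(\ell)$ and does not secretly blow up with $k$: one must verify that the gadget structure truly lets a constant number of heavy paths cover each $k$-copied bundle without violating flow conservation at the gadget boundaries, and that the weights $f_k$ on the non-copied edges $(s,v),(u,v),(u,t)$ are consistent with routing this way. Establishing a clean, fully correct explicit decomposition of the right size --- and checking it satisfies $f_k = \sum w_i P_i$ on every edge --- is where the care is required.
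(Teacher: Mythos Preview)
Your overall plan---read off $\pw{G_{(k,\ell)}}=6k+\ell(k+2)$ from the caption, exhibit an explicit decomposition to upper-bound $\mfdsizen{G_{(k,\ell)}}{f_k}$, and compare the two---is exactly the paper's approach. The problem is the target bound you are aiming for.

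You claim that $\mfdsizen{G_{(k,\ell)}}{f_k}=O(\ell)$, essentially independent of $k$, on the grounds that ``one path of weight $\Vert f_k\Vert$ can simultaneously account for a whole bundle'' of the $k$ yellow copies. This is a genuine misconception: the $k$ copies are $k$ \emph{distinct} edges sitting in the maximum minimal cut-set, hence pairwise unreachable within their gadget. A single $s$--$t$ path, no matter how large its weight, can traverse at most one edge of such an antichain. Consequently any flow decomposition needs at least $k$ paths just to touch the $k$ copies inside a single gadget (and in fact at least $3k$ for the two large gadgets). So $\mfdsizen{G_{(k,\ell)}}{f_k}=\Omega(k)$, and your intended $O(\ell)$ bound is impossible.

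The paper instead shows $\mfdsizen{G_{(k,\ell)}}{f_k}\le 5k+2\ell$: one routes $k$ paths that each pick up one flow-$2$ cut-set edge per gadget (using the bold connecting edges between gadgets), and then $4k+2\ell$ further paths finish the job. With this bound the ratio is
\[
\frac{6k+\ell(k+2)}{5k+2\ell}\xrightarrow[k\to\infty]{}\frac{6+\ell}{5},
\]
so one must first pick $\ell>5c-6$ and only then send $k\to\infty$. Your order of quantifiers (``pick $\ell$ first, then $k$'') happens to be right, but for the wrong reason: with the correct $\Theta(k+\ell)$ denominator the limit is finite in $k$, and it is the choice of $\ell$ that actually pushes the ratio past $c$.
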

\begin{proof}
    Clearly, $\pw{G_{(k, \ell)}} = 6k + \ell(k+2)$.
    To decompose the flow in an efficient way, we can decompose all the maximum cut-set edges of flow $2$ with $k$ paths, which also saturates all the connecting central edges between them (in bold). We then use additional $4k + 2\ell$ paths to fully decompose the graph. In total, $\mfdsizen{G_{(k, \ell)}}{f_k}\leq 5k + 2\ell$, and thus, $$ \frac{\pw{G_{(k, \ell)}}}{\mfdsizen{G_{(k, \ell)}}{f_k}} \geq \frac{6k + \ell(k+2)}{5k + 2\ell} \xrightarrow{k \to \infty} \frac{6 + \ell}{5}. $$
    This shows that for $\ell > 5c - 6$, there exists $k > 0$, such that $\pw{G_{(k, \ell)}} / \mfdsizen{G_{(k, \ell)}}{f_k} > c$.
\end{proof}

\sloppy The strategy is now to show that the approximation algorithm uses in the worst case almost $\pw{G_{(k, \ell)}}$ many paths to cover the odd edges in the second iteration. This will imply the approximation factor of $\Omega(\Vert f\Vert)$. Let $\mathcal{P}_i$ be the set of paths that the approximation algorithm uses to cover the odd edges in iteration $i$. That is, $|\mathcal{P}_i|$ is the optimal solution of the LP \ref{eq:min-flow-parity-fixing} in iteration $i$ of the algorithm.

\begin{lemma}
    \label{lem:mfd-approx-bad-in-general}
    In the worst case, \Cref{alg:mfd-approx} is a factor $\Omega(||f||)$ approximation algorithm.
\end{lemma}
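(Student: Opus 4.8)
The plan is to run \Cref{alg:mfd-approx} on the family $(G_{(k,\ell)}, f_k)$ from \Cref{fig:bad-example-approx} and to trace its behaviour through the first two iterations. I would arrange (reading off the construction) that every maximum cut-set edge carries \emph{even} flow in $f_k$, while the parity that must be fixed in iteration $i=0$ lives only on the remaining edges. Solving Problem~\ref{eq:min-flow-parity-fixing} in iteration $0$ then yields a minimal flow $f_0$ that routes only through these remaining edges and leaves the even parity of the maximum cut-set edges untouched, so that after halving, $f^{(1)} = (f_k - f_0)/2$ becomes \emph{odd} on a constant fraction of the maximum cut-set edges.

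The heart of the argument is the second iteration ($i=1$). Here Problem~\ref{eq:min-flow-parity-fixing} must send at least one unit of flow through each of these now-odd maximum cut-set edges, subject to the upper bound $f^{(1)}$. I would argue that any flow meeting these constraints is, in $G|_{f^{(1)}}$, a minimal flow covering $\Omega(\pw{G_{(k,\ell)}})$ edges of a largest minimal cut-set, and therefore has value $\val{f_1} = |\mathcal{P}_1| = \Omega(\pw{G_{(k,\ell)}})$: the defining property of the parallel-width in \Cref{lem:pwidth-width-fwidth} is precisely that the edges of such a cut-set cannot be covered at once by fewer paths. Since \Cref{alg:mfd-approx} outputs the paths of \emph{all} iterations, the total number of returned paths is at least $|\mathcal{P}_1| = \Omega(\pw{G_{(k,\ell)}})$.

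Combining this with the optimum bound $\mfdsizen{G_{(k,\ell)}}{f_k} \leq 5k + 2\ell$ established in the proof of \Cref{lem:large-pw-mfd-ratio}, the approximation ratio of \Cref{alg:mfd-approx} on this instance is at least $\Omega\!\left(\pw{G_{(k,\ell)}} / (5k+2\ell)\right)$. I would then balance the parameters: since $\pw{G_{(k,\ell)}} = 6k + \ell(k+2)$ grows (e.g.\ quadratically for $k = \ell$) while $5k+2\ell$ grows only linearly, this ratio can be driven to $\Theta(k)$, and I would choose the magnitude of $f_k$ on the maximum cut-set edges so that $\Vert f_k\Vert$ scales with the same quantity, i.e.\ $\pw{G_{(k,\ell)}}/(5k+2\ell) = \Omega(\Vert f_k\Vert)$. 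This gives the claimed factor $\Omega(\Vert f\Vert)$, in sharp contrast to the $O(\log\Vert f\Vert)$ factor achievable for \mfdz.

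The main obstacle I anticipate is the second step: showing that the second-iteration cost is genuinely \emph{forced} to be $\Omega(\pw{G_{(k,\ell)}})$. Problem~\ref{eq:min-flow-parity-fixing} fixes only the \emph{value} of $f_0$, not the flow itself, so different optimal choices in iteration $0$ could leave different edges odd after halving. I would therefore need to verify, from the gadget structure of $G_{(k,\ell)}$, that \emph{every} optimal $f_0$ stays even on a constant fraction of the maximum cut-set edges — for instance by making the odd edges of $f_k$ coverable only through routes disjoint from the maximum cut-set, so that no minimum-value $f_0$ ever decreases those edges to an odd residue. Establishing this tie-break-robust invariant, and confirming that the surviving odd edges really constitute $\Omega(\pw{G_{(k,\ell)}})$ edges of a single largest minimal cut-set, is where the careful case analysis of the construction will be required.
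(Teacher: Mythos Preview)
Your plan is the paper's: run \Cref{alg:mfd-approx} on $(G_{(k,\ell)},f_k)$, argue that iteration $i=1$ is forced to output $\Omega(\pw{G_{(k,\ell)}})$ paths, compare against the optimum bound $5k+2\ell$ from \Cref{lem:large-pw-mfd-ratio}, set $k=\ell$, and use $\Vert f_k\Vert=\Theta(k)$. The concrete mechanism in the paper is: take $k$ odd, so iteration $0$ must cover the $4k$ weight-$3$ edges and the $2\ell$ weight-$3k$ edges; the $2k$ units of flow this requires \emph{saturate} the bold central connecting edges of weight $2k$, so that after halving, the $6k+k\ell$ odd cut-set edges in iteration $1$ are pairwise unreachable and each requires its own path.

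Two corrections. First, your anticipated obstacle is not one: the lemma says ``in the worst case'', so you only need to exhibit \emph{one} optimal choice of $f_0$ (the one that saturates the connecting edges) under which iteration $1$ blows up; you do not need to prove that every optimal $f_0$ behaves this way. Second, your opening assertion that ``every maximum cut-set edge carries even flow in $f_k$'' does not match the construction --- several of those edges carry odd weight $3$ and are exactly what iteration $0$ covers. The engine of the lower bound is not that the cut-set starts even, but that iteration $0$ destroys the only edges linking the gadgets, leaving the cut-set edges mutually unreachable in $G|_{f^{(1)}}$.
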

\begin{proof}
    Let $k$ be odd. In the first iteration the approximation algorithm fixes the parity of the $4k$ edges whose flow weight is each $3$ using $2k$ many paths. If they traverse the $2\ell$ edges whose flow weight is each $3k$, they decompose the central connecting edges in bold of flow $2k$, and only them. In the second iteration ($i = 1$), after dividing the flow by $2$, there are $|\mathcal{P}_1| = 6k + k\ell$ many odd edges that are pairwise unreachable. Thus, if we have $k = \ell$,
    $$ \frac{|\mathcal{P}_1|}{\mfdsizen{G_{(k, k)}}{f_k}}\geq \frac{6k + k^2}{6k} = \Theta(k). $$
    The approximation factor follows, because $k = \frac{1}{8}\Vert f\Vert = \Theta(\Vert f\Vert)$, and the number of paths returned by the approximation algorithm is at least $|\mathcal{P}_1|$.
\end{proof}

\end{document}